\newtheorem{theorem}{Theorem}[section]
\newtheorem{corollary}{Corollary}[theorem]
\newtheorem{lemma}[theorem]{Lemma}
\newtheorem{proposition}[theorem]{Proposition}
\newtheorem{definition}{Definition}[theorem]
\newcommand{\ketbraind}[3]{\ket{#1}_#2\! \bra{#3}}
\newcommand{\ceil}[1]{\left\lceil #1 \right\rceil}
\newcommand{\floor}[1]{\left\lfloor #1 \right\rfloor}
\begin{document}
%%%%%%%%%%%%%%%%%%%%%%%%%%%%%%%%%%%%%%%%%%%%%%%%%%%%%%%%%%%%%%%%%%%%%%

%%%%%%%%%%%%%%%%%%%%%%%%%%%%%%%%%%%%%%%%%%%%%%%%%%%%%%%%%%%%%%%%%%%%%%
% Note before the title page
\begin{abstract}
Variational quantum algorithms (VQAs) have enabled a wide range of applications on near-term quantum devices. However, their scalability is fundamentally limited by barren plateaus, where the probability of encountering large gradients vanishes exponentially with system size. In addition, noise induces barren plateaus, deterministically flattening the cost landscape.
Dissipative quantum algorithms that leverage nonunitary dynamics to prepare quantum states via engineered cooling offer a complementary framework with remarkable robustness to noise. We demonstrate that dissipative quantum algorithms based on non-unital channels can avoid both unitary and noise-induced barren plateaus. Periodically resetting ancillary qubits actively extracts entropy from the system, maintaining gradient magnitudes and enabling scalable optimization. We provide analytic conditions ensuring they remain trainable even in the presence of noise. Numerical simulations confirm our predictions and illustrate scenarios where unitary algorithms fail but dissipative algorithms succeed. Our framework positions dissipative quantum algorithms as a scalable, noise-resilient alternative to traditional VQAs.
\end{abstract}
\title{Scaling Quantum Algorithms via Dissipation: Avoiding Barren Plateaus}

\author{Elias Zapusek}
\email{zapuseke@ethz.ch}
\affiliation{Institute for Quantum Electronics, ETH Z\"{u}rich, 8093 Z\"{u}rich, Switzerland}
\author{Ivan Rojkov}
\affiliation{Institute for Quantum Electronics, ETH Z\"{u}rich, 8093 Z\"{u}rich, Switzerland}
\author{Florentin Reiter}
\affiliation{Institute for Quantum Electronics, ETH Z\"{u}rich, 8093 Z\"{u}rich, Switzerland}
\affiliation{Fraunhofer  Institute for Applied Solid State Physics IAF, Tullastr. 72, 
79108 Freiburg, Germany}

\date{\today}
\maketitle

%%%%%%%%%%%%%%%%%%%%%%%%%%%%%%%%%%%%%%%%
%%%%%%%%%   Introduction  %%%%%%%%%%%%%%
%%%%%%%%%%%%%%%%%%%%%%%%%%%%%%%%%%%%%%%%
\section{Introduction}
Variational quantum algorithms (VQAs) have significantly expanded the applications of quantum computers \cite{peruzzo_variational_2014,cerezo_variational_2021}. They encode a computational problem in a cost function, and by optimizing it variationally countless new applications can be explored. Their relatively short depth makes them suitable for noisy present-day devices \cite{peruzzo_variational_2014}. However, they suffer from fundamental scalability barriers known as barren plateaus, where the gradient exponentially concentrates at zero; meaning that the probability of encountering a nontrivial gradient decreases exponentially with the size of the system \cite{mcclean_barren_2018,sharma_trainability_2022,arrasmith_effect_2021,cerezo_cost_2021,holmes_connecting_2022,cerveromartin_barren_2023,cerezo_does_2024,crognaletti_estimates_2024,larocca_barren_2025}. Barren plateaus relate to the size of the space the ansatz explores and can be caused by circuit expressiveness \cite{mcclean_barren_2018,holmes_connecting_2022,zhang_absence_2024}, global cost functions \cite{cerezo_cost_2021} and entanglement \cite{wiersema_measurementinduced_2023,sack_avoiding_2022}. The presence of noise exacerbates the issue resulting in deterministic concentration, referred to as noise-induced barren plateaus \cite{wang_noise-induced_2021,schumann_emergence_2024,singkanipa_beyond_2025}.

Dissipative quantum algorithms offer a complementary approach to purely unitary dynamics \cite{poyatos_quantum_1996,verstraete_quantum_2009,reiter_scalable_2016, polla_Quantum_2021,cubitt_Dissipative_2023,chen_efficient_2023,ding_single-ancilla_2024,mi_stable_2024,ulcakar_generalized_2024,cobos_noiseaware_2024,ilin_dissipative_2025,zapusek_variational_2025,lin_dissipative_2025}. They offer advantages such as driving to the computational output as a steady state \cite{verstraete_quantum_2009}, expanding the class of preparable states \cite{chen_efficient_2023,reiter_Engineering_2021}, and enabling alternative algorithmic strategies that can tackle problems beyond the reach of unitary approaches \cite{chen_local_2025}. In the context of variational algorithms, dissipation has been shown to alleviate barren plateaus that arise due to the globality of the cost functions \cite{sannia_engineered_2024}. Dissipative quantum algorithms, which exploit interactions with an environment to drive a quantum system toward a target state through cooling dynamics, are highly resilient to noise \cite{kastoryano_dissipative_2011,morigi_dissipative_2015, polla_Quantum_2021,mi_stable_2024}. This raises a natural question: can the robustness of dissipative ground state preparation be extended to parameterized quantum algorithms?

In this work, we demonstrate that dissipative quantum algorithms leveraging non-unital quantum channels are free from both noise-induced and unitary barren plateaus. By periodically resetting ancillary qubits during the algorithm's operation, entropy is actively removed from the system, preventing gradient concentration and enabling efficient optimization. Our approach builds upon the techniques developed in Ref.~\cite{mele_noise-induced_2024}, which established that non-unital noise can mitigate barren plateaus. Our work is part of a broader line of work leveraging dissipation engineering to enhance quantum algorithms \cite{verstraete_quantum_2009,zapusek_variational_2025,halimeh_stabilizing_2022} and, unlike  Ref.~\cite{mele_noise-induced_2024} considers reset as a resource, demonstrating the resilience of dissipative algorithms. 
The variance of the cost function in dynamical circuits can be lower bounded, thereby indicating the existence of trainable parameters in principle \cite{deshpande_dynamic_2024}. In contrast, we provide explicit architectural criteria and pinpoint which specific gradients remain robust against concentration. This constructive approach offers practical guidance for designing circuits with guaranteed trainability.
Importantly, not all dissipative constructions are robust to noise; we establish when dissipative circuits can scale in the presence of noise and show that this robustness allows dissipative algorithms to solve problems that are inaccessible to noisy unitary circuits. While the resources involved are reminiscent of those required for full quantum error correction, the associated overhead in physical qubits is significantly reduced. Our results establish dissipative quantum algorithms as a scalable alternative to unitary approaches, opening new pathways for robust quantum computation on noisy devices.

%%%%%%%%%%%%%%%%%%%%%%%%%%%%%%%%%%%%%%%%
%%%%%%%%%   Results       %%%%%%%%%%%%%%
%%%%%%%%%%%%%%%%%%%%%%%%%%%%%%%%%%%%%%%%
\begin{figure*}
    \centering
    \includegraphics[width=\textwidth]{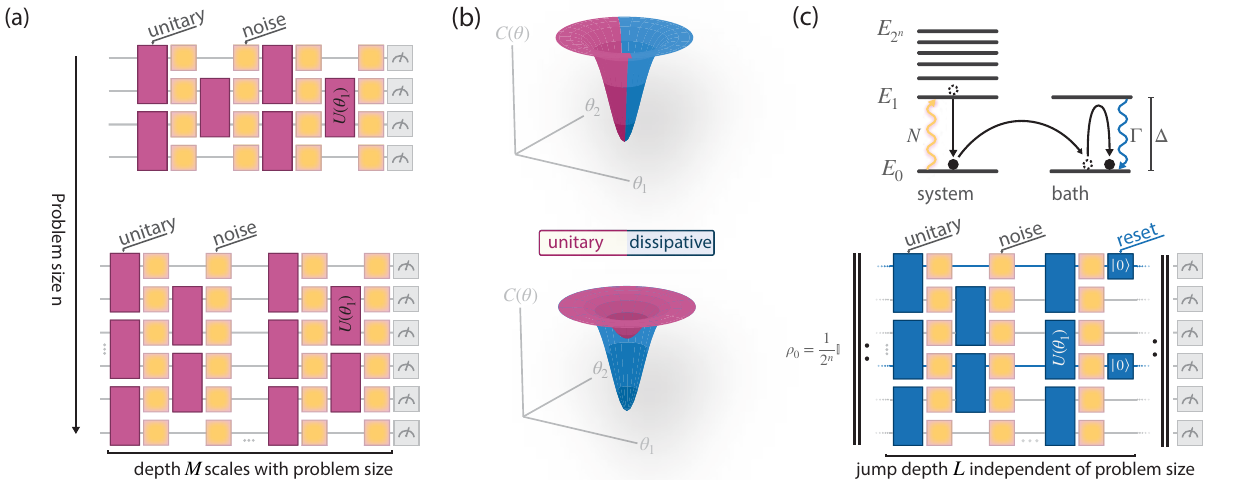}%dqml_Fig2a
    \caption{\textbf{Comparison of unitary and dissipative algorithms.} \textbf{(a)} Unitary circuit. For most problems the depth of the unitary circuit must scale with the size of the problem. The increase in depth exponentially suppresses the gradients of the cost function making the circuit untrainable. \textbf{(b)} Cost landscape. Dissipative algorithms can avoid the suppression of gradients. \textbf{(c)} Dissipative circuits. The algorithm works by
weakly coupling a system to a cooled bath. When the energy splitting of the system $E_1-E_0$, matches that of the bath $\Delta$ an excitation of the system can be mapped to the bath and removed by the cooling $\Gamma$. This is noise robust, as if an error re-excites the system it can be removed by the cooling.
In the circuit picture, system and bath evolutions are modeled by unitaries, and the cooling is modeled by periodic reset of qubits.
%Dissipative circuits. The algorithm works by coupling the system to a cooled bath. This is noise robust as if an error re-excites the system it can be removed by the cooling.
}
    \label{fig:Maindoublewide}
\end{figure*}
\section{Results:}
\subsection{Framework}\label{sec:Framework}

We investigate random parameterized quantum circuits under the influence of noise, comparing two different circuit ans\"atze as shown in Fig.~\ref{fig:Maindoublewide}. The qubits are arranged on a lattice of dimension $d$, where each qubit is connected to 2$d$ neighbors via unitaries. Panel (a) depicts the unitary circuit, which consists solely of unitary gates and unital noise channels with circuit depth $M$. In contrast, the dissipative circuit shown in panel (c) includes unitary gates, unital noise, and additional non-unital channels---specifically, reset operations. We define $L$ as the depth (that is, the number of layers) between consecutive resets and $M$ as the total number of reset applications. The overall system evolution is described by a quantum channel 
\begin{align} \label{eq:circuit_M}
\Phi^M = \Phi_{\left[ L \cdot M,1\right]},
\end{align}
where
\begin{align}\label{eq:circuit_ab}
\Phi_{\left[ a,b\right]} = \prod_{k=a}^{b} \left(\mathcal{V}^k\circ \mathcal{N}\circ \mathcal{U}^k \circ \mathcal{A}^{\chi(k)}\right).
\end{align}
Here, $\mathcal{V}^j(\cdot) = V^j (\cdot) V^{j\dag}$ represents a layer of single-qubit gates, $\mathcal{N}$, the unital noise, and $\mathcal{A}$ denotes the reset of a subset of qubits. Each layer $\mathcal{U}^j = U^j (\cdot) U^{j\dag}$ consists of a single application of a $d$-dimensional brickwork circuit composed of two-qubit gates. Figure~\ref{fig:circuit_modelmain} provides a schematic representation of the dissipative circuit architecture. We are interested in the average case behavior of these circuits and assume that the unitaries are drawn from a distribution that forms a two-design, i.e., matches the uniform distribution over unitaries up to the second moment \cite{dankert_exact_2009}. Unital noise $\mathcal{N}$ is applied after each layer of two-qubit gates. The noise channel $\mathcal{N}$ is modeled as an $n$-qubit tensor product of contractive, unital single-qubit channels, distinct from the fully depolarizing channel. Under the action of the single-qubit noise channel, the Bloch sphere is mapped to an ellipsoid with axis-specific contraction factors. $D_{\max}$ denotes the largest of these factors, corresponding to the axis least affected by the noise.
In the dissipative circuit, the non-unital channel $\mathcal{A}_{n_r}$ implements periodic resets of $n_r$ equidistant qubits via an amplitude damping process. That is, in a $d$-dimensional lattice of size $n$, the spacing between reset qubits is constrained so that in each direction their separation is at least $\floor{ \left( n/{n_r}\right)^{1/d}}$ and at most $\ceil{ \left( {n}/{n_r}\right)^{1/d}}$. Each reset occurs with probability $q$. The function $\chi(l)$ determines when the resets are applied:
$ \chi(l) = 1 $ if $ l \bmod L = 0$ and $ \chi(l) = 0$ otherwise.
We refer to a block of $L$ layers as a \textit{jump}, marking the interval between consecutive reset operations.

The circuit can depend on a set of variational parameters $\bm{\theta} = (\theta_1, \dots, \theta_m) \in \mathbb{R}^m$, which control a subset of two-qubit gates. Each such gate has the form $\exp(-i\theta_\mu H_\mu)$, where $H_\mu$ are two-local Hermitian operators that satisfy $0 < \|H_\mu\|_\infty \leq 1$. The parameters are optimized by minimizing a cost function of the form
$C(\bm{\theta}) =\Tr(O \rho (\bm{\theta})) =\Tr(O\, \Phi^M\! (\rho_0))$ with observable $O$. The diameter of an observable is the maximum Manhattan distance between any two non-identity Pauli operators within any single Pauli string that appears in the observable's decomposition.
Owing to the left- and right-invariance of Haar random two-qubit gates, the action of the parameterized gates can be absorbed into the surrounding random unitaries.

\begin{figure}
    \centering
    \includegraphics[width = \columnwidth]{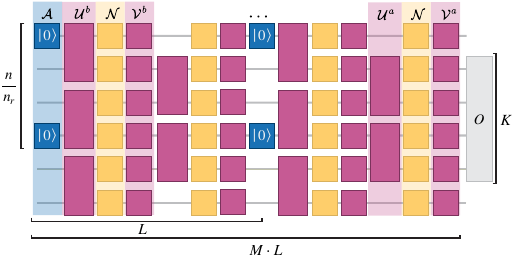}
    \caption{\textbf{Circuit model.} The circuit model $\Phi_{[a,b]}$ in one dimension. The circuit consists of the channel $\mathcal{A}^\chi$ (resetting equidistant qubits), two-qubit unitaries $\mathcal{U}$, noise $\mathcal{N}$ and single-qubit unitaries $\mathcal{V}$.
    Since $\chi(b) = 1$ and $\chi(a) = 0$, a reset is applied in the first displayed layer but not in the last. The jump depth---defined as the number of layers between consecutive resets---is $L = 2$. The fraction of qubits that is reset is described by $n_r/n = 1/3$. The parameter $M$ denotes the total number of reset events (or "jumps") applied throughout the circuit, while $K$ specifies the diameter of the observable $O$.}
    \label{fig:circuit_modelmain}
\end{figure}
Unitary brickwork circuits, illustrated in Fig.~\ref{fig:Maindoublewide}(a), exhibit the barren plateau phenomenon \cite{mcclean_barren_2018,harrow_approximate_2023}, whereby the variance of the cost function gradients is exponentially concentrated with system size:
\begin{align}
    \operatorname{Var}\left[ \partial_\mu C \right] = \mathcal{O}\left(\frac{1}{b^n} \right),
\end{align}
for some constant $b>1$. Since the gradients are also centered around zero, encountering large gradients becomes exponentially unlikely as the system size grows. Resolving these vanishing gradients requires an exponential number of measurement shots, effectively eliminating any potential quantum advantage. 

Noise further amplifies the problem, leading to deterministic concentration \cite{wang_noise-induced_2021}, where the entire cost landscape becomes flat. A phenomenon known as Noise-induced barren plateaus (NIBP). Both gradients and expectation values converge exponentially with circuit depth toward a noise-induced fixed point, rendering optimization infeasible and eliminating the possibility of any smart initialization strategies.

\subsection{Analytic results}\label{sec:Analytic_results}
With the challenge of barren plateaus in mind, we investigate the gradients of dissipative quantum circuits and find that dissipative quantum algorithms exhibit drastically different behavior compared to their unitary counterparts: Periodically resetting a subset of qubits avoids both unitary and noise-induced barren plateaus. We first present our results on the variance of cost functions, as it is instrumental in proving our main result.
\begin{proposition}(Lower bound of the variance) \label{prop:variance_lowerbound_main}
     Consider a circuit as in Sec.~\ref{sec:Framework} defined on a lattice of $n$ qubits and dimension $d$. Every $L$ layers, $n_{r}$ equidistant qubits are reset with probability $q$. Let $O$ be an observable with $\operatorname{diam}(O) = K$. The variance after $M \in \mathbb{N}$ jumps is lower bounded by:
\begin{align}\label{eq:lower_bound_variance_main}
    \operatorname{Var} \left[ \Tr\left(O\, \Phi^M\!(\rho_0)\right)\right] \geq \frac{A}{C^{\Lambda}}\, (D_{\max})^{2 {K^d  \Delta}} \, q^{\,2 \Gamma }.
\end{align}
     $D_\text{max}$ is the largest contraction factor, corresponding to the axis least shrunk by the single-qubit noise channel. $C$ and $A$ are positive constants. Crucially, $\Lambda$, $\Gamma$, and $\Delta$ depend polynomially on $K$ and $L$ but are independent of the depth of the circuit ${L \cdot M}$. If $n_r = \mathcal{O}(n)$, the lower bound is independent of the number of qubits $n$.
\end{proposition}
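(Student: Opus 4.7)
The plan is to work in the Heisenberg picture and reduce the variance to the contribution of a single, well-chosen Pauli component of $O$ whose backward propagation is stabilized by the amplitude-damping resets. Writing
\[
\Tr\!\big(O\,\Phi^M(\rho_0)\big)=\Tr\!\big((\Phi^M)^{\dagger}(O)\,\rho_0\big),
\]
I would expand $O=\sum_P c_P P$ over Pauli strings, each supported on a region of at most $K^d$ qubits by the diameter hypothesis. I would then select a single Pauli $P_{\star}$ anchored near a reset qubit, always possible when $n_r=\Theta(n)$ since reset qubits are then $\mathcal{O}(1)$-spaced, and lower bound the full variance by the isolated covariance contribution of $P_\star$, discarding cross terms via 2-design orthogonality of distinct Pauli worldlines.

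The second step is the analysis of one "jump" of depth $L$. The backward brickwork light cone grows the operator support by at most $L$ in each lattice direction, so after one jump the evolved Pauli is supported on at most $(K+2L)^d$ qubits; each of the $L$ noise layers then contributes a multiplicative factor of at least $D_{\max}^{|S|}$ on the running support $S$. Using the explicit backward action of the amplitude-damping reset on Paulis,
\[
\mathcal{A}^{\dagger}(Z)=(1-q)Z+qI,\qquad \mathcal{A}^{\dagger}(X)=\sqrt{1-q}\,X,\qquad \mathcal{A}^{\dagger}(Y)=\sqrt{1-q}\,Y,
\]
the jump channel decomposes as
\[
(\Phi_{[L,1]})^{\dagger}(P_\star)=\alpha\,\big(I_{R}\otimes P_\star'\big)+(\text{orthogonal tail}),
\]
where $R$ is the set of reset qubits in the local light cone, $P_\star'$ is a Pauli of bounded support on the non-reset qubits, and the scalar $\alpha$ aggregates $q$-factors from the identity-promoting branch together with $D_{\max}$-factors from the noise. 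Crucially, $\alpha$ depends only on $K,L,q,D_{\max}$, not on the layer index where the jump sits.

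The third step is to iterate this decomposition across all $M$ jumps. The identity-promoting branch behaves as an approximate bounded-support fixed point: after every jump only a region of size $\operatorname{poly}(K,L)$ around $P_\star$ carries nontrivial Pauli weight, so the accumulated prefactor across $M$ jumps is $\alpha^{\mathcal{O}(1)}$ rather than $\alpha^{M}$. Invoking the standard 2-design second-moment formulas then selects the diagram in which both copies of $P_\star$ follow this stabilized worldline, producing the announced bound with exponents $\Lambda,\Gamma,\Delta=\operatorname{poly}(K,L)$ independent of $M$; when $n_r=\Theta(n)$ the entire argument is confined to a local light cone, so the bound is also $n$-independent.

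The hardest step will be the iteration in the third paragraph: certifying that the identity-promoting branch of the reset genuinely stabilizes the operator, rather than slowly leaking weight into ever-larger or vanishing Pauli sectors. This requires a controlled Pauli-transfer-matrix analysis that balances the forward spread of the brickwork against the backward contraction of amplitude damping, likely via a Markov chain on the local Pauli support inside the effective light cone. It is also where the explicit form of the exponents $\Lambda,\Gamma,\Delta$ as polynomials in $K$ and $L$ is quantitatively pinned down.
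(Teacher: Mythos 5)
Your first two steps track the paper's proof closely: the Pauli decomposition of the variance via the single-qubit 2-design second-moment formula, the selection of a single favorable Pauli component, and the fixing of Clifford gates to contract its support toward a reset qubit while collecting factors of $D_{\max}$ from the noise and $q$ from the resets are all the right moves, and they produce the exponents $\Lambda,\Gamma,\Delta$ in essentially the way the paper does. The gap is in your third step, and it is exactly the step you flag as hardest. Your decomposition $(\Phi_{[L,1]})^{\dagger}(P_\star)=\alpha\,(I_R\otimes P_\star')+(\text{tail})$ keeps a \emph{non-identity} Pauli $P_\star'$ alive on the non-reset qubits, and you then propose to iterate this through all $M$ jumps, hoping that bounded support makes the accumulated prefactor $\alpha^{\mathcal{O}(1)}$. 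This cannot work as stated: the noise channel is contractive, so $D_{\max}<1$, and every one of the $L$ noise layers in every jump multiplies the surviving non-identity component by at least one factor of $D_{\max}$ per supported qubit. A bounded-support Pauli worldline therefore decays by at least $D_{\max}^{L}$ per jump, i.e.\ as $D_{\max}^{LM}$ overall --- precisely the noise-induced concentration the proposition is supposed to circumvent. No Markov-chain or Pauli-transfer-matrix analysis of the support will rescue this, because the decay comes from the magnitude of the transfer coefficients, not from the support spreading.

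The missing idea is that the backward evolution must \emph{terminate} at the first reset rather than continue through the remaining jumps. Once the observable has been contracted to a single $Z$ on a reset qubit (which takes at most $\Delta$ layers, with $\Delta$ depending only on $K$, $L$, $d$ and $n/n_r$), the adjoint amplitude-damping channel gives $\mathcal{A}^{*}(Z)=qI+(1-q)Z$, and the paper keeps the \emph{identity} branch, not the Pauli branch: since $\Phi_{[LM-j,1]}(\rho_0)$ is a physical state, $\mathbb{E}\big[\Tr\big(I\,\Phi_{[LM-j,1]}(\rho_0)\big)^2\big]=1$ exactly, independently of everything that happens in the earlier $M-\mathcal{O}(1)$ jumps. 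This is why the bound costs only one additional factor of $q^{2}$ and is manifestly independent of $M$. Your intuition about an ``identity-promoting branch'' points in the right direction, but your written decomposition takes the identity only on the reset qubits and retains $P_\star'$ elsewhere, which is the opposite of what is needed; the correct move is to pass to the full identity and stop.
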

The proof is given in Sec.~\ref{sec:VarLowerMethods} and further details are given in SI.~\ref{app:Variance_Lower_app}. Periodic qubit resets prevent the concentration of expectation values, as the non-unital reset channels counteract the effects of unital noise. Practically, the non-vanishing lower bound on the variance enables gradient-free optimization strategies \cite{arrasmith_effect_2021} and allows for characterization of the cost function value \cite{peruzzo_variational_2014}. This mechanism underpins our main result on the gradients of dissipative circuits.
\begin{corollary}(Absence of barren plateaus logarithmic depth jumps) \label{cor:AbsenceBarrenLogmain}
    Consider a circuit as described in Section~\ref{sec:Framework}, with jumps of logarithmic depth ${L = \mathcal{O}(\log(n))}$ on a $d$-dimensional lattice. Let $\exp(-i \theta_\mu H_\mu)$ be a parameterized gate located at layer $LM - i$, where ${i = \mathcal{O}(\log(n))}$. If the number of reset qubits scales with system size, that is, $n_r = \mathcal{O}(n)$, and the support of the gate lies within the inverse light cone of a local cost function with bounded support $K = \mathcal{O}(1)$, then the gradient of the cost function with respect to $\theta_\mu$ does not exhibit exponential concentration. We can lower bound the variance of the gradient by 
    \begin{align}
        \operatorname{Var}\left[\partial_\mu C \right] \geq \Omega\left(\frac{1}{\operatorname{poly(n)}}\right).
    \end{align}
\end{corollary}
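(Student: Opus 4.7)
The plan is to reduce the corollary to a direct application of Proposition~\ref{prop:variance_lowerbound_main} for a suitably modified observable, exploiting the $\mathcal{O}(\log n)$ distance between the parameterized gate and the cost observable.

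First, I would split the channel at the gate location as $\Phi^M = \Phi_{\mathrm{after}} \circ \mathcal{G}_\mu \circ \Phi_{\mathrm{before}}$, with $\mathcal{G}_\mu(\cdot) = e^{-i\theta_\mu H_\mu}(\cdot)e^{i\theta_\mu H_\mu}$. Differentiating under the trace and using cyclicity of the commutator yields
\begin{equation}
    \partial_\mu C = -i\,\Tr\!\bigl([\Phi_{\mathrm{after}}^\dagger(O),H_\mu]\,\mathcal{G}_\mu(\Phi_{\mathrm{before}}(\rho_0))\bigr).
\end{equation}
Invoking the left-invariance of the Haar measure for the two-qubit gate adjacent to $\mathcal{G}_\mu$---the absorption trick noted in Sec.~\ref{sec:Framework}---one may fold $e^{-i\theta_\mu H_\mu}$ into this neighbor without changing the circuit distribution, effectively eliminating $\mathcal{G}_\mu$ from the inner argument. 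Thus, for variance purposes,
\begin{equation}
    \partial_\mu C = \Tr\!\bigl(\tilde O\,\Phi_{\mathrm{before}}(\rho_0)\bigr),\qquad \tilde O \coloneqq -i\bigl[\Phi_{\mathrm{after}}^\dagger(O),H_\mu\bigr],
\end{equation}
which has precisely the form of an expectation value covered by Proposition~\ref{prop:variance_lowerbound_main}.

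Next, I would bound the diameter of $\tilde O$ via a light-cone argument. The adjoint channel $\Phi_{\mathrm{after}}^\dagger$ consists of at most $i = \mathcal{O}(\log n)$ brickwork layers together with single-qubit unitaries, the noise channel $\mathcal{N}^\dagger$, and a bounded number of reset layers; each brickwork layer dilates the support of a Heisenberg-evolved Pauli string by an additive constant. Hence $\operatorname{diam}(\Phi_{\mathrm{after}}^\dagger(O)) \leq K + c\,i = \mathcal{O}(\log n)$, and commuting with the two-local $H_\mu$---whose support lies inside the inverse light cone of $O$ by hypothesis---does not enlarge this further, giving $\operatorname{diam}(\tilde O) = K' = \mathcal{O}(\log n)$. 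Because $\Phi_{\mathrm{before}}$ inherits the architecture of $\Phi^{M'}$ for some $M' \leq M$, with the same jump depth $L$, lattice dimension $d$, and reset fraction $n_r = \mathcal{O}(n)$, Proposition~\ref{prop:variance_lowerbound_main} applies directly and yields
\begin{equation}
    \operatorname{Var}[\partial_\mu C] \geq \frac{A}{C^\Lambda}\,(D_{\max})^{2 (K')^d \Delta}\,q^{2\Gamma},
\end{equation}
with $\Lambda,\Gamma,\Delta$ polynomial in $K'$ and $L$, both of which are $\mathcal{O}(\log n)$. Converting to the claimed $\operatorname{poly}(n)$ form is then immediate: any constant base raised to an $\mathcal{O}(\log n)$ power equals $n^{\mathcal{O}(1)}$, so $\operatorname{Var}[\partial_\mu C] \geq \Omega(1/\operatorname{poly}(n))$.

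\emph{Expected main obstacle.} The decisive step is this final conversion, which requires extracting from the proof of Proposition~\ref{prop:variance_lowerbound_main} the explicit polynomial degrees of $\Lambda, \Gamma, \Delta$ in $K'$ and $L$. Only essentially linear dependence on $L$, together with sufficiently mild joint dependence with $(K')^d$, yields a true inverse-polynomial lower bound; any superlogarithmic growth of these exponents would degrade the estimate to $n^{-\operatorname{polylog}(n)}$. A secondary difficulty is verifying that the commutator structure of $\tilde O$ (rather than a raw Pauli expansion) does not incur extra cost in the Haar-moment calculations underlying Proposition~\ref{prop:variance_lowerbound_main}; since $\tilde O$ decomposes into Pauli strings with coefficients bounded by $2\|H_\mu\|_\infty \leq 2$, linearity of the variance estimate in the Pauli decomposition should cover this, but requires explicit verification.
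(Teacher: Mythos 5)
There is a genuine gap, and it sits exactly where you flagged your ``expected main obstacle'': your reduction lets the observable dilate to diameter $K' = \mathcal{O}(\log n)$ before invoking Proposition~\ref{prop:variance_lowerbound_main}, and with that diameter the Proposition's exponents become superlogarithmic. Concretely, with $K' = \mathcal{O}(\log n)$ one has $\delta = \tfrac12 (n/n_r)^{1/d} + K'/2 = \mathcal{O}(\log n)$ and hence $\Delta = \mathcal{O}(\log n)$, so the factor $(D_{\max})^{2 (K')^d \Delta} = \exp\bigl(-\Theta((\log n)^{d+1})\bigr)$ already for $d=1$ decays as $n^{-\Theta(\log n)}$; likewise $C^{-\Lambda}$ with $\Lambda = \mathcal{O}\bigl((\log n)^{d+1}\bigr)$ is quasi-polynomially small. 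Your final bound is therefore $\Omega\bigl(n^{-\operatorname{polylog}(n)}\bigr)$, which does not establish the claimed $\Omega(1/\operatorname{poly}(n))$. The paper avoids this by never letting the support grow: it backward-evolves the Pauli components of $O$ through the $i-1$ layers while \emph{fixing} Clifford gates along the path so that the diameter stays at $K = \mathcal{O}(1)$, paying only a per-gate constant, i.e.\ a factor $(D_{\max}^2/C)^{K^d(i-1)} = n^{-\mathcal{O}(1)}$; Proposition~\ref{prop:variance_lowerbound_main} is then applied to an observable of \emph{constant} diameter, for which $\Lambda,\Delta = \mathcal{O}(L) = \mathcal{O}(\log n)$ and $\Gamma = \mathcal{O}(1)$, so every factor is inverse-polynomial.

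A second, unresolved issue is that your $\tilde O = -i[\Phi_{\mathrm{after}}^\dagger(O),H_\mu]$ is a \emph{random} operator, and the constant $A$ in the Proposition collects $\sum_P a_P^2$ of the observable. To get a nonvacuous bound you must lower bound the (appropriately weighted) Pauli weight of this commutator over the randomness of $\Phi_{\mathrm{after}}$ --- the commutator can vanish for particular realizations, and the noise in $\Phi_{\mathrm{after}}$ shrinks it. This is not covered by ``coefficients bounded by $2\|H_\mu\|_\infty$'' (an upper bound, where a lower bound is needed). The paper handles it by choosing the fixed Clifford path so that the backward-evolved Pauli $Q_b$ anticommutes with the dominant Pauli component $\tilde R$ of $H_\mu$, and then bounding $b_{\tilde R}^2 \geq \|H_\mu\|_\infty^2/64$; some analogue of this step is indispensable in your route as well.
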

The statement is proven in Sec.~\ref{sec:AbsenceBarrenSketch} further details are provided in SI.~\ref{app:Barren_Lower_app}.
Periodic resets alter the gradient landscape, preventing concentration. It should be noted that the gradient variance is independent of the depth of the circuit $L \cdot M$.

 In the case where the jumps are of constant depth, $L = \mathcal{O}(1)$, the support of the observable $O$ is bounded, and the gate of which we calculate the gradient is a constant number of layers from the measurement, we can lower bound the variance of the gradient by a constant.

\subsection{Numerical results}
To support our analytical results and explore circuit structures beyond the assumptions of the proof, we perform numerical simulations. These include scenarios where unitaries are correlated across layers, as well as other widely used ans\"atze, such as the QAOA-inspired ansatz, demonstrating that our analytical results hold in highly relevant and practical settings. In SI.~\ref{app:Barren_nonoise} further details on the numerics are given.

We isolate the effect of unitary barren plateaus by considering a setting without noise. In the first scenario, the ansatz we consider has the brickwork structure shown in Fig.~\ref{fig:circuit_modelmain}. It consists of two-qubit bricks of the form:
\begin{align}
    U(\bm{\theta}) =\left[ R_Y(\theta_1) \otimes R_Y(\theta_2) \right] \operatorname{CNOT}\left[R_X(\theta_3) \otimes R_X(\theta_4)\right].
\end{align}
The rotations are defined as $R_A(\theta) = e^{-\frac{i}{2} \theta A }$. 
We consider the architecture in one dimension, with the bricks alternatingly connecting qubits $(1,2)(3,4)\dots(n-1,n)$ and qubits $(2,3)(4,5)\dots(n-2,n-1)$. The ends of the one-dimensional chain are not connected.

In the second scenario, we go beyond the brickwork structure with a QAOA inspired circuit often used in quantum machine learning, state preparation, and optimization tasks \cite{farhi_quantum_2014,verdon_quantum_2019}. The circuit layers consist of gates $R_{ZZ}$ $R_X$ and $R_y$, while $R_{ZZ}$ is applied with even and odd first qubits.

We refer to the \textit{dissipative ansatz} as the version where non-unital amplitude damping channels are added to the two scenarios. This channel is applied to every second qubit every 5 layers, assuming a perfect reset with the output state $\ketbra{0}{0}$.

Using a circuit consisting of $M=40$-layers, we define the cost function $C(\bm{\theta}) = \Tr (O \rho(\bm\theta) )$ for the observable $O = Z_1$, which corresponds to measuring Pauli-$Z$ on the second qubit. The gradient is calculated with respect to the parameter $\theta_\mu$ located in the final layer before measurement.

In Fig.~\ref{fig:Main_unitarybarren2plots} we plot the variance of the gradient. In panel (a), we show the variance of the gradient for all unitaries parameterized individually. Panel (b) goes beyond the assumptions of our proof; here the parameters of the unitaries are repeated every five layers, reminiscent of the circuit shown in Fig.~\ref{fig:Maindoublewide} (c). Both panels qualitatively show the same behavior; the variance of the gradient is exponentially concentrated for the unitary ansatz but not for the dissipative one. Consequently, the unitary ansatz suffers from barren plateaus, while the dissipative ansatz is barren plateau-free. The concentration of the gradient is due to the space the ansatz explores rather than an effect of the noise, since the unitary ansatz is noiseless. 

\begin{figure}
    \centering
     % the plotting script is in PythonSteadyState/Unitary_Variance/Data_23_01_PaperPlots/Plotting The variance.ipynb
    \includegraphics{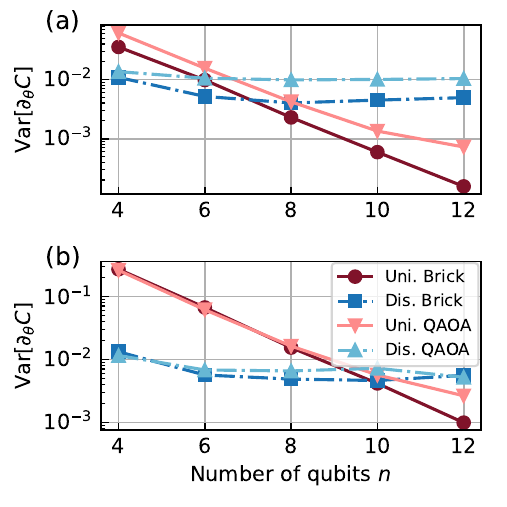}
    \caption{\textbf{Unitary barren plateaus.} We plot the gradient variance for a one-dimensional ansatz with 40 layers. The dissipative ansatz resets every second qubit after 5 layers. In (a), the parameters of all layers are independently distributed, while in (b), they repeat every five layers. The gradient is measured for a local cost function based on the observable $Z$ on the second qubit. While the unitary gradient decays exponentially with system size, the gradient of the dissipative ansatz stays constant.}
    \label{fig:Main_unitarybarren2plots}
\end{figure}
We now numerically show that the dissipative variational quantum circuits can avoid noise-induced barren plateaus. We consider the preparation of toric code ground states, prototypical examples of topologically ordered states, in the presence of noise. We assume qubits arranged on a torus with local connectivity and in the dissipative setting, we additionally assume that each plaquette and vertex are coupled to an ancilla. These ancillae are reset during the operation of the algorithm. The Hamiltonian of the toric code reads \cite{dennis_topological_2002}:
%The presence of unital contractive noise exasperates the issue of barren plateaus and in unitary circuits results in both probabilistic and deterministic concentration of both expectation values and gradients while a dissipative circuit can avoid this effect. Here we consider the effect of noise numerically and present a problem that can only be solved by a dissipative ansatz in SI.~\ref{app:toric_NIBP} we provide further details. We consider ground state preparation of the toric code Hamiltonian:
\begin{align}
    H = -\sum_\nu A_{\nu} -\sum_\beta B_{\beta}.
\end{align}
Here, $A_\nu = X_{\nu,1}X_{\nu,2}X_{\nu,3}X_{\nu,4}$ describes the stabilizer associated with plaquette $\nu$ and $B_\beta= Z_{\beta,1}Z_{\beta,2}Z_{\beta,3}Z_{\beta,4}$ the stabilizer associated with vertex $\beta$.

%\paragraph{Unitary circuit depth}
A quantum circuit that, starting from a product state, prepares the ground state of the toric code with quasi-local gates on the lattice must have a depth of at least $\Omega( \sqrt{n})$ where $n$ is the number of qubits in the system due to the Lieb-Robinson bound \cite{dennis_topological_2002,bravyi_lieb-robinson_2006}. We assume that we do not know the form of the unitary circuit that prepares the ground state. Instead, we want to train a variational quantum eigensolver (VQE) with a QAOA ansatz to prepare the ground state. We interleave layers of the form 
\begin{align}
U(\bm{\theta}) &= \prod_{j=L} ^1 U_l(\bm{\theta}_l),\\
    U_l(\bm{\theta}_l)&=\prod_\nu e^{-\frac{i}{2}\theta_{l,A} A_\nu}\prod_\beta e^{-\frac{i}{2}\theta_{l,B} B_\beta}.
\end{align}
We model errors in the system by introducing single-qubit depolarizing noise for every rotation $e^{-\frac{i}{2}\theta_{l,A} A_\nu}$ on all qubits involved in the operation. 

%\paragraph{Dissipative circuit depth} 
The ground state of the toric code can be prepared by quasi-local Markovian jump operators \cite{dennis_topological_2002,verstraete_quantum_2009}. The total time required to produce the ground state is not reduced in the dissipative setting \cite{konig_Generating_2014}. 
However, in the circuit picture the quasi-local jumps can be implemented with a constant number of layers between resets. Therefore, we expect the ansatz to be robust against noise induced and unitary barren plateaus.

%%%Dissipative ansatz
The dissipative ansatz we consider is inspired by circuits that map to ground states of the individual stabilizer terms \cite{barreiro_open-system_2011}. Usually, such a circuit is constructed from the following primitives: To transform to a +1 eigenstate of a $Z$ stabilizer $B_\beta = Z_{\beta,1}Z_{\beta,2}Z_{\beta,3}Z_{\beta,4}$ associated with vertex $\beta$, we first map the parity to an ancilla qubit using four CNOT operations with the involved qubits as control and the ancilla as a target. Then by applying a CNOT operation controlled by the ancilla, the parity is switched conditioned on the system parity being odd. This operation maps any state into an even parity state making it a +1 eigenstate of the vertex operator. 
To prepare ground states of the $X$ terms, the stabilizer readout needs to be sandwiched between Hadamard operations. We parameterize this ansatz by replacing all CNOT gates by $\text{C}R_X(\theta)$ gates with a $S$ operation on the target qubit which for $\theta = \pi$ perform a CNOT operation. Depolarizing noise is applied to all qubits involved in the stabilizer readout after every layer.
The depth of the layers that implement the jumps is independent of the number of physical qubits. As a consequence of the constant depth of the jumps, NIBPs do not present an issue.

\begin{figure}
    \centering
    % the plotting script is in Pythonsteadstate/ToricCode/Trained_Entropy_Cost/Cost
    \includegraphics{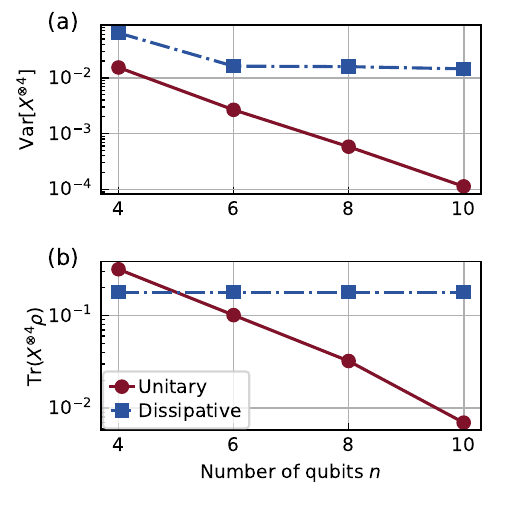}
    \caption{\textbf{Noise-induced barren plateaus.} The presence of noise leads to exponential concentration of expectation values in the depth of the circuit. In (a) we plot the variance of the gradient as a function of qubit number for the circuits that prepare toric code ground states. The variance expectation value of the unitary circuit is exponentially suppressed in system size, while that of the dissipative circuit plateaus. In the presence of noise the worst case expectation value is also exponentially concentrated. We show this in panel (b) where we plot the expectation value of one Hamiltonian term of the toric code Hamiltonian. The unitary circuit exponentially approaches a zero while the dissipative circuit does not.}
    \label{fig:Noise_varconcnetration_detconcentraion}
\end{figure}
In Fig.~\ref{fig:Noise_varconcnetration_detconcentraion} we show numerical evidence for the absence of noise-induced barren plateaus in dissipative learners. In panel (a), we plot the variance of the expectation value of a single Pauli string for both unitary and dissipative circuits across different system sizes.
The system we consider is defined on a rectangular lattice of height $2$ and width $n/2$. This allows us to simulate more system sizes before running into computational bottlenecks. In this setting, the depth of the unitary circuit has to increase as $n/2$ to prepare the ground state while the dissipative jumps remain constant depth. As a consequence, the expectation value of the unitary exponentially concentrates at the value of the fully mixed state. The deterministic concentration is shown in panel (b). Here, we evaluate the expectation value of a single Hamiltonian term for the trained parameters. The expectation value of the single term remains constant with system size for the dissipative ansatz but is suppressed in for the unitary one. The gradients exhibit the same behavior as shown in SI. \ref{app:toric_NIBP}. These results show that a dissipative ansatz can prepare toric code ground states in the presence of noise and, in turn, that there exist practical problems for which unitary learners suffer from NIBP while dissipative learners can avoid them.

\section{Discussion:}
We have demonstrated that parameterized dissipative algorithms can be scaled in the presence of noise. Specifically, we show that the addition of individual qubits that are periodically reset during the algorithm's execution enables the removal of entropy from the system, thereby mitigating noise-induced barren plateaus. Beyond this, the reset mechanism constrains the exploration of the ansatz to a restricted subspace, effectively preventing unitary barren plateaus. Through these mechanisms, dissipative quantum algorithms overcome some of the most pressing scalability challenges faced by variational quantum algorithms. As a result, they can prepare topologically ordered states, such as the toric code ground state, even in the presence of noise.

These insights not only highlight fundamental scalability advantages, but also open up opportunities to enhance existing dissipative protocols.
For example, in the context of ground-state preparation, the success of cooling algorithms, such as Refs.~\cite{raghunandan_initialization_2020, polla_Quantum_2021, mi_stable_2024}, crucially depends on the choice of bath splitting. Our work suggests that by parameterizing these dissipative circuits, one could optimize this splitting and improve both efficiency and accuracy of such cooling schemes.

Beyond algorithmic improvements, our findings also invite us to revisit the limits of dissipative quantum computing architectures. For instance, \citet{verstraete_quantum_2009} propose a continuous-time evolution governed by a Lindbladian designed to perform arbitrary quantum computations consisting of $T$ gates. The steady state of this evolution has an overlap of $1/(T+1)$ with the desired computational output. The Lindbladian dynamics involves both unital and non-unital operations. This raises the question: Is the construction robust to noise? The answer is negative---the scheme offers no greater noise robustness than standard unitary quantum computation. Achieving the desired output requires that all $T$ unital steps occur without error leading to an exponential suppression of the desired output. As we discuss in detail in SI.~\ref{app:VWC}, this highlights the importance of developing dissipative schemes that do not simply replicate unitary protocols in a dissipative setting but actively leverage dissipation to overcome noise-related limitations.

In the presence of unital, depolarizing noise the output state of any unitary quantum circuit exponentially approaches the fully mixed state in circuit depth allowing for classical simulation \cite{aharonov_limitations_1996,aharonov_polynomial-time_2023,fontana_classical_2023,gonzalez-garcia_pauli_2025}. The expectation values of local operators in random circuits can be classically simulated even in the presence of non-unital noise \cite{mele_noise-induced_2024,fefferman_effect_2024}. However, by carefully exploiting the dissipation, one can run circuits beyond logarithmic depth ensuring the non-simulatability using classical computers \cite{ben-or_quantum_2013,shtanko_complexity_2024}. This nuanced interplay between noise and dissipation is an exciting direction for future research.

Local error-correcting codes that act on only $\mathcal{O}(n)$ physical qubits, where $n$ denotes the number of logical qubits, inevitably lead to an exponential increase in entropy. Interestingly, for certain quantum states, dissipative algorithms are capable of circumventing this entropy growth. Identifying and characterizing the class of states for which dissipative dynamics can escape this exponential scaling remains an important and open question.

Looking ahead, we anticipate that dissipative quantum algorithms may find applications well beyond state preparation, motivating further research into their conceptual advantages. Certain computational tasks are inherently well suited to dissipative dynamics; for example, Gibbs states naturally emerge from interactions with an environment, and their mixed-state character necessitates nonunitary processes such as dissipation for their preparation \cite{zapusek_nonunitary_2023,van_mourik_experimental_2024,zapusek_variational_2025}. Beyond state preparation, dissipative frameworks may provide new avenues for learning error-correcting codes, where the task can be viewed as cooling into the ground-state manifold of a stabilizer Hamiltonian \cite{reiter_dissipative_2017,cong_quantum_2019,rojkov_stabilization_2024,rojkov_scalable_2025}. More broadly, dissipative circuits differ fundamentally from unitary ones in their ability to remove information, which could offer advantages in machine learning contexts by eliminating irrelevant features and improving generalization performance \cite{banchi_Generalization_2021,muser_provable_2024,hu_overcoming_2024}. Together, these perspectives suggest that dissipation is not just a tool for noise mitigation, but a powerful resource to expand the scope and capability of quantum algorithms.

\begin{acknowledgments}
We thank Wojciech Adamczyk, Vasilis Belis, and Jarrod R. McClean for helpful discussions and feedback. F.R., I.R., and E.Z. acknowledge funding from the Swiss National Science Foundation (Ambizione grant no. PZ00P2 186040) and the ETH Research Grant ETH-49 20-2. 
The project was conceived by E.Z. and F.R. The theoretical results were derived by E.Z. Numerical simulations were carried out by E.Z. in collaboration with I.R. F.R. provided guidance throughout the project. All authors contributed to the discussion, interpretation of the results, and writing of the manuscript.
\end{acknowledgments}

%%%%%%%%%%%%%%%%%%%%%%%%%%%%%%%%%%%%%%%%
%%%%%%%%%   Methods       %%%%%%%%%%%%%%
%%%%%%%%%%%%%%%%%%%%%%%%%%%%%%%%%%%%%%%%

\section{Methods:}
\subsection{Preliminaries}
Throughout our calculations, it is useful to expand density matrices in the Pauli tensor product basis. For a single-qubit state, we write
\begin{align}
    \rho = \frac{1}{2}\left(I + \sum_{j=1}^{3}v_j F_j \right) = \frac{1}{2}(I + \bm{v} \bm{F}),
\end{align}
with $\bm{F} = (X,Y,Z)$ the vector of Pauli operators. We refer to $\bm{v}$ as the \textit{coherence vector}. Any unital noise channel transforms it linearly; $\bm{v} \rightarrow \bm{M} \bm{v}$ \cite{singkanipa_beyond_2025}. Using the left and right invariance of the Haar measure, we can transform the channel into the so-called \textit{ normal form} where the matrix $M$ takes the simple form:
\begin{align}
    \bm{M} = \begin{pmatrix}
         D_X  &0  & 0 \\
         0 & D_Y & 0 \\
         0  & 0 & D_Z 
    \end{pmatrix}.
\end{align}
See Refs. \cite{king_minimal_2001,verstraete_quantum_2002} and Lemma~\ref{lem:UnitalChannelProp}. The assumption that $\mathcal{N}$ is not the fully depolarizing channel guarantees that one $D_Q$ is non-zero. 
The expectation value of a traceless observable $O = \bm{a} \cdot \bm{F}$ can be written as $\Tr (O \rho )=  \bm{a}\cdot \bm{v}$.
%\begin{align}
%    \Tr (O \rho )= \Tr\left(\bm{a} \cdot \bm{F} \frac{1}{2}(I + \bm{v} \bm{F}) \right) = \bm{a}\cdot \bm{v}.
%\end{align}
With respect to this scalar product, we can define the adjoint of a quantum channel $\mathcal{N}$ as the map $\mathcal{N}^*$ satisfying
\begin{align}
    \Tr (O \mathcal{N}(\rho )) =  \bm{a}\cdot \bm{M} \bm{v} = \bm{M}^T \bm{a}\cdot \bm{v} = \Tr (\mathcal{N}^*(O) \rho ).
\end{align}

The Haar measure defines the uniform distribution over the set of unitary matrices. The most important properties of the Haar measure that we use are introduced in SI.~\ref{SI:Haar}, for further reading, consider Refs. \cite{collins_integration_2006,christandl_structure_2006,dankert_exact_2009,mele_introduction_2023}.
 An essential ingredient in our proofs is the fact that the Clifford group forms a two-design, meaning that it matches the Haar measure up to the second moment \cite{gross_evenly_2007,webb_clifford_2016}. This property allows us to replace Haar averages of second-moment expressions with averages over the finite Clifford group. In particular, for a real function $g:U(2) \to \mathbb{R}$ we can lower bound the Haar expectation value
\begin{align}
    \underset{U \sim U(2)}{\mathbb{E}}\left[ g(U)^2\right] 
    &= \frac{1}{\abs{\operatorname{CL}(2)}}\sum_{U \in \operatorname{CL}(2)} g(U)^2  \nonumber \\
    &\geq \frac{1}{\abs{\operatorname{CL}(2)}} g(C_\text{fixed})^2. \label{eq:Lower_Clifford_fixed_methods2}
\end{align}
Here, $C_\text{fixed}\in \operatorname{CL}(2)$ denotes an arbitrary fixed Clifford element.

\subsection{Proof of lower bound of the variance} \label{sec:VarLowerMethods}
Here we provide a proof of Proposition~\ref{prop:variance_lowerbound_main}. More details are presented in SI.~\ref{app:Variance_Lower_app}.

\begin{proof}[Proof of Proposition~\ref{prop:variance_lowerbound_main}]
We lower bound the variance after applying $M$ jumps. To begin, we consider the expectation value of the observable and use the fact that the final layer of single-qubit gates forms a global one-design \cite{mele_noise-induced_2024}:
\begin{align}
    \mathbb{E}\left[ \Tr \left( O\, \Phi^M\!(\rho_0)\right)\right] &=  
    \mathbb{E}\left[ \Tr \left( \mathbb{E}\left[ V^\dag O  V\right]\Phi'^M(\rho_0)\right)\right] \label{eq:ExpectationVanishes}\\
   &= \mathbb{E}\left[ \Tr \left( \frac{a_{I^{\otimes n}}}{2^n} I \Phi'^M(\rho_0)\right)\right]  = a_{I^{\otimes n}}, \nonumber
\end{align}
where $\Phi'$ denotes the circuit with the final layer of single-qubit gates removed. We evaluate the variance $
\operatorname{Var} \left[ \Tr ( O\, \Phi^M\!(\rho_0)) \right] =\mathbb{E} \left[ \Tr \left( O\, \Phi^M\!(\rho_0)\right)^2 \right] -\mathbb{E}\left[ \Tr \left( O\Phi^M(\rho_0)\right)\right]^2
$.
By applying the first point of Lemma~\ref{lem:SecondMomentSingle} (see also \cite[Lemma 16]{mele_noise-induced_2024}) to the last layer of gates, we can rewrite 
\begin{align}
\mathbb{E}&\big[\Tr(O\, \Phi^M\!(\rho_0))^2\big] =\!\!\!\!\!\! \sum_{P \in \{I, X, Y, Z\}^{\otimes n}} a_P^2 \, \mathbb{E}\big[\Tr(P \Phi'^M(\rho_0))^2\big] \nonumber\\
&= a_{I^{\otimes n}}^2 + \!\!\!\!\!\!\!\!\!\!\!\!\sum_{P \in \{I, X, Y, Z\}^{\otimes n} \setminus \{I^{\otimes n}\}} a_P^2 \, \mathbb{E}\big[\Tr(P \Phi'^M(\rho_0))^2\big]. \label{eq:Varaince_writtenout_methods}
\end{align}
The squared expectation values of the Pauli strings can be treated separately. The unital noise channel $\mathcal{N}$ is sandwiched between Haar random unitaries, allowing us to use their unitary invariance to bring the channel to normal form. The action of the adjoint noise channel $\mathcal{N}^*$ on a Pauli string is
\begin{align}
    \mathcal{N}^*(Q) = D_Q Q.
\end{align}
Since $\mathcal{N}$ is not fully depolarizing, the maximal factor satisfies $D_{\max} = \max_{Q \in \{X,Y,Z\}}(D_Q) > 0$. 
Moreover, for any nontrivial Pauli string $P$, there exists a Clifford unitary that maps the individual non-identity components of $P$ to $Q$. We can then lower bound
 \begin{align}
    & \quad \quad\quad \quad\quad \quad\quad \quad \mathbb{E}\big[\Tr(P \Phi'^M(\rho_0))^2\big] \nonumber\\
     &\geq \frac{1}{\abs{\operatorname{CL}(1)}^{2 \vert P \vert}} \mathbb{E}\left[ \Tr (\mathcal{N}^{*\otimes n}(Q^{\otimes \operatorname{supp}(P)}_\text{max} )  \Phi''_{\left[LM,1 \right]})^2\right] \nonumber\\
         & = \left(\frac{D_{\operatorname{max}}}{\abs{\operatorname{CL}(1)}}  \right)^{2  \vert P \vert} \mathbb{E}\left[\Tr( Q^{\otimes \operatorname{supp}(P)}_\text{max}\Phi''_{\left[LM,1 \right]})^2 \right]. \label{eq:methods_fixedshrinking}
 \end{align}
\begin{figure}
    \centering
    \includegraphics{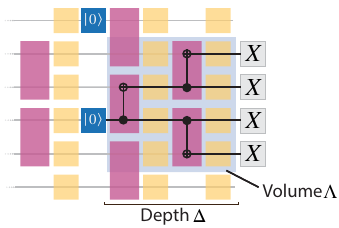}
    \caption{\textbf{Operator transformation example} A circuit illustrating how two-qubit Clifford gates can be fixed to transform observables. Specifically, the circuit maps the observable $X^{\otimes 4}$ to $I \otimes I \otimes X \otimes I$. The required gate depth $\Delta$ and the volume $\Lambda$ of fixed gates enabling this transformation are highlighted.}
    \label{fig:Variance_examplecircuit_methods}
\end{figure}
We now use the layers of the quantum circuit to transform the $\abs{P}$-local Pauli operator into a single-qubit Pauli operator on a qubit that is reset. We illustrate this in one dimension for Pauli string $X^K$ in Fig.~\ref{fig:Variance_examplecircuit_methods}. We apply: $\left[I^{\otimes K-1}  \otimes \text{CNOT}^\dag \right] X^{\otimes K} \left[  I^{ \otimes K-1}  \otimes \text{CNOT} \right] = X^{\otimes K-1} \otimes I $, shrinking the support of the observable by one in every layer. 
Each layer also applies a noise channel, contributing a suppression factor. We can upper bound the number of qubits in the support of the observable by the volume of the $d$-dimensional hypercube with edge length $K$ and therefore $D_{\max}^{2\abs{P} } \geq D_{\max}^{2 K^d }$.

Furthermore, because of the fixing gates, we accumulate a factor $C^{-1}$ with $C \leq \abs{\operatorname{CL}(2)}\abs{\operatorname{CL}(1)}$ for every qubit and layer involved in the transformation. To calculate the total factor, we need to understand what volume of gates needs to be fixed to transform the observable to a single-qubit observable on a reset qubit.

For this transformation, we need to fix the 2 qubit gates in the path. There are two cases to distinguish:
\begin{enumerate}
    \item No qubit in the path of the Pauli is reset. 
    \item One or multiple qubits are reset. 
\end{enumerate}
In the first case, no qubit in $\operatorname{supp}(P)$ is reset, and we must establish a connection to the nearest reset qubit. We choose an arbitrary direction $\hat{e}_i$. In the worst case, the operator is positioned symmetrically between two reset qubits in this direction. Consequently, the edge of the operator is at most 
\begin{align}
    \delta = \frac{1}{2}\left(\frac{n}{n_r}\right)^{1/d} \!\!+ \,\frac{K}{2}
\end{align}
away from the nearest reset qubit. In each unitary step, we can reduce the support of the operator by at least one site in that dimension, and we must perform this contraction in all $d$ dimensions. Therefore, the depth required to reduce the Pauli operator to a single-qubit operator is at most $d(\delta-1)$. In addition, we need to fix the gates until the next reset occurs, which can take up to $L$ layers. Thus, the total depth needed to contract the operator to a single site is at most $ d(\delta-1) + L.$
The volume of fixed gates can then be upper bounded by $\delta^d \left( d(\delta-1) + L\right).$

In the second case, a qubit in $\operatorname{supp}(P)$ is reset. We can transform the operator to a single-qubit Pauli on this qubit. This requires a depth smaller than $d(K-1) + L$ and a volume smaller than $K^d(d(K-1) + L)$. When a qubit in the path is reset, we first apply a Clifford that maps the relevant Pauli to $Z$; after the reset, the operator becomes $\mathcal{A}(Z)^{\otimes 2 }\otimes R^{\otimes 2 } = (q I + (1-q)Z)^{\otimes 2 }\otimes R^{\otimes 2 }$ where $R$ is a Pauli string on the other qubits. Expanding the reset part yields four terms, and by the Pauli-mixing lemma~\ref{lem:Pauli2Mixing} only symmetric positive terms remain, we can drop all but the identity term, which now has a prefactor $q^2$.
 The maximum number of reset qubits in this volume is $ \Gamma = \frac{K^d}{n/n_r} \frac{d(K-1)+L}{L}$ resulting in a total factor $q^{2\Gamma}$.
 
We obtain the following upper bounds on the extent of gates we need to fix:
\begin{align}
    \Delta &\leq \max \left[ d (\delta-1)  + L,d (K-1)+L\right] , \nonumber\\
    \Lambda &\leq \max\left[   \delta^d \left( d (\delta-1)  + L\right),K^d(d (K-1) + L)\right],\label{eq:contractionparams_methods}\\
    \Gamma &\leq \frac{ n_r }{n} \frac{K^d(d(K-1)+L)}{L}. \nonumber
\end{align}
In the final layer before the application of the reset we choose Clifford gates that transform the operator to a $Z$ operator on a reset qubit. Let $LM -j$ denote the depth at which the contracted observable meets the reset qubit, satisfying $j \leq \Delta$, then combining the factors from fixing gates, noise and reset we find,
\begin{align}
    \mathbb{E}\big[\Tr(P \Phi'^M(\rho_0))^2\big] \geq \frac{1}{C^\Lambda} D_{\max}^{2 {K^d \Delta}} q^{2 \Gamma} \nonumber\\ \times \mathbb{E}\left[ \Tr\left( (Z \otimes I^{n-1})\mathcal{A}(\Phi_{\left[ L M - j,1 \right]}(\rho_0))\right)^2 \right]. \label{eq:Shrinking_methods}
\end{align}
We apply the adjoint map of the amplitude damping channel $\mathcal{A}$ to the Pauli string and use the first point of Lemma~\ref{lem:SecondMomentSingle} to retain only positive symmetric terms
\begin{align}
   \mathbb{E}&\left[ \Tr\left(\mathcal{A}^* (Z \otimes I^{n-1})\Phi_{\left[ L M - j,1 \right]}(\rho_0)\right)^2 \right]
 \nonumber\\
    &=\mathbb{E}\left[ \Tr\left( \left(q I^{n}+(1-q)Z\otimes I^{n-1}\right) \Phi_{\left[ L M - j \right]}(\rho_0)\right)^2 \right] \nonumber\\
    &\geq q^2 \mathbb{E}\left[ \Tr\left( I^{n}\Phi_{\left[ L M - j,1 \right]}(\rho_0)\right)^2\right] = q^2 \label{eq:Damping_methodsq2}
\end{align}
In the final equality we used that $\Phi_{\left[ L M - j ,1\right]}(\rho_0)$ is a physical state and therefore has trace 1. We combine Eq.~\eqref{eq:Damping_methodsq2} and Eq.~\eqref{eq:Shrinking_methods} to lower bound the variance of an individual term. By summing all the individual terms in Eq.~\eqref{eq:Varaince_writtenout_methods}, we obtain the lower bound:
\begin{align}
    \operatorname{Var} \left[ \Tr\left(O\, \Phi^M\!(\rho_0)\right)\right] \geq \sum a_P^2 \frac{1}{C^\Lambda} D_{\max}^{2 {K^d \,\Delta}} q^{2 \Gamma + 2}.
\end{align}
Collecting terms in the constant $A$ one can rewrite the lower bound as in Eq.~\eqref{eq:lower_bound_variance_main}.
The lower bound is independent of the number of jumps, $M$, allowing us to choose an arbitrarily large $M$. 
\end{proof}
\subsection{Proof absence of barren plateaus}\label{sec:AbsenceBarrenSketch}
Corollary~\ref{cor:AbsenceBarrenLogmain} stated in Sec.~\ref{sec:Analytic_results} is a corollary of more general Theorem ~\ref{thm:lower_derivative} that we present in SI~\ref{app:Barren_Lower_app}. The reader can find further details about the theorem and its proof in SI~\ref{app:Barren_Lower_app}. Here we provide a proof of the theorem from which the Corollary directly follows.
\begin{proof} [Proof of Corollary~\ref{cor:AbsenceBarrenLogmain}]
The gradient of the cost function with respect to $\theta_\mu$ evaluates to
\begin{align}
    &\partial_\mu C = \Tr\left(  \Phi_{[LM-i-1,1]}(\rho_0) \partial_\mu\Phi_{[LM,LM-i]} ^*( O) \right) \nonumber\\
    &= \Tr\left(  \Phi_{[LM-i-1,1]}(\rho_0) \partial_\mu e^{i H_\mu \theta_\mu}\Tilde{\Phi}_{[LM,LM-i]} ^*( O) e^{-i H_\mu \theta_\mu}\right)\nonumber\\
    %&= i \Tr\left(  \Phi_{[LM-i-1,1]}(\rho_0) H_\mu\Phi_{[LM,LM-i]} ^*( O) \right)  - j \Tr\left(  \Phi_{[LM-i-1,1]}(\rho_0) \Phi_{[LM,LM-i]} ^*( O)H_\mu \right) \nonumber\\
    &= i \Tr\left(  \Phi_{[LM-i-1,1]}(\rho_0) \left[ H_\mu,\Phi_{[LM,LM-i]} ^*( O)\right] \right).
\end{align}
 Analogous to Eq.~\eqref{eq:ExpectationVanishes} one can now show that the expectation value of the gradient vanishes (see Lemma~\ref{lem:ExpGradZero}). The variance now takes the simple form
\begin{align}
    \operatorname{Var} \left[ \partial_\mu C\right] = \mathbb{E}\left[\left( \partial_\mu C\right)^2\right] = \mathbb{E}\left[\left( f_0(O) \right)^2\right],
\end{align}
where we define 
\begin{align}
f_j(\cdot) := i \, \Tr \big( \Phi_{[1,k]}(\rho_0) \big[ H_\mu, \Phi^*_{[k,L-j]}(\cdot) \big] \big).
\end{align}
Furthermore, we define $f'_j(\cdot)$ analogously to $f_j(\cdot)$, but using $\Phi'_{[k,L-j]}$, which is the map $\Phi_{[k,L-j]}$ with the single-qubit unitaries removed.
We use the first point of Lemma \ref{lem:PauliMixingGradients} to re-express:
\begin{align}
    \mathbb{E}\left[\left( f_0(O) \right)^2\right] = \sum_{P \in \{I, X, Y, Z\}^{\otimes n}} a_P^2 \mathbb{E}\left[\left( f_0(P) \right)^2\right].
\end{align}
In the following step we track how these Pauli observables $P$ are affected by the noise model. We use the fact that the Clifford group forms a 2-design and, just as in Eq. \eqref{eq:methods_fixedshrinking}, we choose Clifford operations that map $P$ to $Q_{\max}$, the Pauli operator corresponding to the least shrunk direction $D_{\max}$. Then
\begin{align}
    \mathbb{E}\left[\left( f_0(P) \right)^2\right] \geq\left(\frac{ D_\text{max}^2 }{\operatorname{CL}(1)}\right)^{\abs{P}} \mathbb{E}\left[ \left( f_0''(Q_{\text{max}}^{\otimes\operatorname{supp}(P)}) \right)^2\right].
\end{align}
Here we define $f''$ using the map $\Phi''^*_{[k,L-j]} $, which excludes both noise and single-qubit unitaries.

Now, as in Eq.~\eqref{eq:Lower_Clifford_fixed_methods2}, we proceed fixing Clifford gates to transform the operator $Q_{\text{max}}^{\otimes\operatorname{supp}(P)}$ into an operator with support on $H_\mu$. We refer to this backward-evolved operator $Q_b$. In this path, we make sure that the diameter of the Pauli operator does not increase. Since $H_\mu$ is in the backward light cone of the observable $O$ we are guaranteed that such a path exists. 
This path passes through $i -1$ layers and requires us to fix less than $K^d (i -1)$ gates. Denoting the number of reset qubits in the path $\gamma = \frac{i-1}{L} K^d \frac{n_r}{n}$ we can lower bound
\begin{align}
\label{eq:backwardstogate_methods}
    \mathbb{E}\left[(f_0(P))^2\right] \geq \left(\frac{D_{\max}^2 }{C} \right) ^{K^d (i -1)} (1-q) ^{2 \gamma} \nonumber\\ \times \mathbb{E}\left[ \Tr \left( \Phi_{[LM - (i-1) ,1]}(\rho_0) i \left[ H_\mu , Q_{b}\right]\right)^2 \right].
\end{align}
The factor $C \leq \abs{\operatorname{CL}(2)}\abs{\operatorname{CL}(1)}$ accounts for fixing single and two-qubit gates and $(1-q)^{2 \gamma}$ accounts for the reset. To ensure that the operator $Q_b$ has nontrivial commutator it cannot be the identity, therefore we retain the $Z$ Pauli string after reset, giving a factor $(1-q)$.
We can expand the generator of the rotation in the two-qubit Pauli basis $H_\mu = \sum_{R \in \{I, X, Y, Z\} ^{\otimes 2}} b_{R} R$. Substituting
\begin{align}
    \mathbb{E}\left[ \Tr \left( \Phi_{[LM - (i-1) ,1]}(\rho_0) i \left[ H_\mu , Q_{b}\right]\right)^2 \right]  = \nonumber\\ \sum_{R \in \{I, X, Y, Z\}^{^{\otimes 2}}} b_R^2 \mathbb{E}\left[ \Tr \left( \Phi_{[LM - (i-1) ,1]}(\rho_0) Q_R\right)^2 \right],
\end{align}
where we defined $Q_R \coloneqq i \left[ R , Q_b\right]$. In the transformation, we made sure that the support of the operator does not grow and therefore $\operatorname{diam(Q_R) \leq K}$. Since $H_\mu$ is traceless and non-zero there exists a $b_{\tilde{R}} = \max_{R \in P(2)} [{b_R}]>0$. We choose the transformation such that the backward evolved $Q_b$ does not commute with $\tilde{R}$, then
\begin{align}
    \mathbb{E}\left[ \Tr \left( \Phi_{[LM - (i-1) ,1]}(\rho_0) i \left[ H_\mu , Q_b\right]\right)^2 \right] \nonumber\\\geq b_{\tilde{R}}^2 \mathbb{E}\left[ \Tr \left( \Phi_{[LM - (i-1) ,1]}(\rho_0) Q_{\tilde{R}}\right)^2 \right].
\end{align}
Since $b_{\tilde{R}}^2$ is the maximal coefficient in the Pauli basis we can lower bound it by the average coefficient
\begin{align}\label{eq:coeff_methods}
    b_{\tilde{R}}^2 \geq \frac{1}{16} \sum b_R^2 = \frac{1}{64} \norm{H_{\mu}}_2^2 \geq \frac{1}{64} \norm{H_\mu}_\infty^2.
\end{align}
Now we apply Proposition~\ref{prop:variance_lowerbound_main} to lower bound the variance of the observable $Q_{\tilde{R}}$ with $\operatorname{diam(Q_{\tilde{R}})} \leq K$ yielding the lower bound:
\begin{align} \label{eq:lowesinglegrad_methods}
   \mathbb{E}\left[ \Tr \left( \Phi_{[LM - (i-1) ,1]}(\rho_0) Q_{\tilde{R}}\right)^2 \right]  \geq \frac{1}{C^{\Lambda}} D_\text{max}^{2K^d\Delta}q^{2\Gamma+2}
\end{align}
with $\Lambda,\Delta$ and $\Gamma$ defined as in Eq.~\eqref{eq:contractionparams_methods}. 
Combining the results in Eq.~\eqref{eq:backwardstogate_methods}, Eq.~\eqref{eq:coeff_methods} and Eq.~\eqref{eq:lowesinglegrad_methods} we find that 
\begin{align}
    \operatorname{Var}\left[ \partial_\mu C \right] 
   % &\geq   \sum_P a_P^2 \mathbb{E}\left[(f_0(P))^2\right]\nonumber \\ 
    &\geq  \sum_P a_P^2  \left(\frac{D_{\max}^2 }{C} \right) ^{K^d (i -1)} (1-q)^{2\gamma}\nonumber\\ 
    & \quad \times\mathbb{E}\left[ \Tr \left( \Phi_{[LM - (i-1) ,1]}(\rho_0) i \left[ H_\mu , Q_b\right]\right)^2 \right] \nonumber\\
    &\geq  \sum_P a_P^2 \left(\frac{D_{\max} ^2}{C} \right) ^{K^d (i -1)} \!\!\!\!\!\!\!\!\! \!\!\!\!\!\!\!(1-q)^{2\gamma}b_{\tilde{R}}^2\frac{1}{C^{\Lambda}} D_\text{max}^{2\Delta}q^{2\Gamma+2} \nonumber\\
    &\geq \sum_P a_P^2  \frac{D_{\max}^{2K^d (i -1) +2 \Delta}}{C ^{K^d (i -1)+\Lambda}}  \frac{\norm{H_{\mu}}_\infty^2}{64} q^{2\Gamma+2}(1-q)^{2\gamma}.
\end{align}
The assumptions of the proof ensure that $\Delta,\Gamma$ and $K$ scale at most logarithmically with the size of the system $n$ and, therefore, $\operatorname{Var}\left[\partial_\mu C\right] \geq \Omega(1/\operatorname{poly(n}))$.
\end{proof}
%%%%%%%%%%%%%%%%%%%%%%%%%%%%%%%%%%%%%%%%%%%%%%%%%%%%%%%%%%%%%%%%%%%%%%%%%%%%%%%%%%%%%%%%%%%%%%%%%%%%%%%%%%%%%%%%%%%%%%%%%%%%%%
\bibliographystyle{apsrev4-2}
\bibliography{references_better_abrv}
\appendix
%\begin{comment}
\begin{onecolumngrid}
\newpage

%%%%%%%%%%%%%%%%%%%%%%%%%%%%%%%%%%%%%%%%
%%%%%%%%%   Supplemental  %%%%%%%%%%%%%%
%%%%%%%%%%%%%%%%%%%%%%%%%%%%%%%%%%%%%%%%

\section{Definitions and notation}
The following definitions and notational conventions are used throughout the supplemental material.
\begin{itemize}
\item For $n\in \mathbb{N}$ we use $\left[n\right]$ to refer to the set of integers $\{1,\dots,n\}$.
    \item $\mathcal{B}(\mathcal{H})$ denotes the space
of bounded linear operators acting on Hilbert space $\mathcal{H}$.

\item For any two $A,B \in \mathcal{B}(\mathcal{H})$ we denote the Hilbert-Schmidt inner product $\langle A, B \rangle = \Tr (A^\dag B).$

\item The \textit{singular values} of a matrix $A \in \mathcal{B}(\mathcal{H})$ are the square roots of the eigenvalues of $A^\dag A$. We denote them $\sigma_i(A)$ with $\sigma_1 \geq \sigma_2 \dots \geq \sigma_q$.

\item Let $|A| \coloneqq \sqrt{A^\dagger A}$. The Schatten $p$-norm of an operator $A \in \mathcal{B}(\mathcal{H})$ is defined for $1 \leq p < \infty$ as 
$
\|A\|_p \coloneqq \big(\Tr(|A|^p)\big)^{1/p} = \bigg(\sum_i \sigma_i^p\bigg)^{1/p}
$.
Here $\sigma_i$ are the singular values of $A$.

In particular, the trace norm $\|A\|_1 \coloneqq \Tr(|A|)$ represents the sum of the singular values. The Hilbert-Schmidt (or Frobenius) norm $\|A\|_2 \coloneqq \sqrt{\langle A, A \rangle} = \sqrt{\Tr(|A|^2)}$ corresponds to the square root of the sum of the squared singular values. The operator norm $\|A\|_\infty \coloneqq \sup_{\|v\| = 1} \|A v\|$ equals the largest singular value.

Unless explicitly stated otherwise, $\|A\|$ will refer to $\|A\|_\infty$. For vectors $v \in \mathcal{H}$, $\|v\|$ will denote the Euclidean norm (2-norm).

\item For matrix $A \in B(\mathcal{H})$ and $1 \leq p \leq q$, we have $ \|A\|_q \leq \|A\|_p $.

\item For $x$ a real number we use $\ceil{x} = \min\left[ m \in \mathbb{Z} : m \geq x \right]$ and $\floor{x} = \max\left[ m \in \mathbb{Z} : m \leq x \right]$ to denote the ceiling and floor functions.
\item $U(d)$ denotes the unitary group of dimension $d$. 

\item The $n$-qubit Pauli group, denoted as $ P(n) $, is the group of all tensor products of single-qubit Pauli matrices, including the identity matrix and phase factors $\{1, -1, i, -i\}$. It is formally defined as:
\[
P(n) = \{ \alpha P_1 \otimes P_2 \otimes \cdots \otimes P_n \mid P_i \in \{I, X, Y, Z\}, \alpha \in \{1, -1, i, -i\} \}.
\]

\item For a $P \in P(n)$ we define the support as the set of indices on which $P$ acts non trivially; $\operatorname{supp}(P) \coloneqq \{ i \in \left[n \right] \;\vert \; P_i \neq I\}$. Furthermore, we use the shorthand $\abs{P} = \abs{\operatorname{supp}(P)}$ to denote the number of nontrivial Pauli strings a operator consists of.

\item We define a \textit{d-dimensional lattice} as the set of point $\Lambda \coloneqq \{ \sum_{i = 1} ^d a_i \Hat{e}_i : a_i \in \mathbb{Z}\}$ for $\{\Hat{e}_i\}_{i=1}^d$ a basis of $\mathbb{R}^d$. When we refer to a lattice of $n$ sites of dimension $d$ we refer to $\Lambda = \{ \sum_{i = 1} ^d a_i \Hat{e}_i : a_i \in \mathbb{N} \text{ and } a_i \leq n^{1/d}\}$.
\item Consider a $d$ dimensional lattice $\Lambda$ we define the \textit{distance} between two points $x,y \in \Lambda$ as the Manhattan-metric $d(x,y) = \sum_{i=1}^d \abs{x_i - y_i}$.
\item For $O \in \mathcal{B}(\mathcal{H}) $ and $O = \sum_{P \in P(n)} a_P P$ we define the \textit{diameter} as:
\begin{align}
\operatorname{diam}(O) \coloneqq \max_{\substack{P \in P(n) \\ a_P \neq 0}} \left[ \max_{\substack{P_1, P_2 \in P \vert \\ P_1 \neq I,\, P_2 \neq I}} \left[d(P_1, P_2) \right]\right],
\end{align}
It captures the maximal distance between two nontrivial Pauli operators that make up $O$. 
\item We define the \textit{one-dimensional brickwork circuit} for $n$ evens as the circuit consisting independent random gates distributed according to $\mu$ for qubits $(1,2)(3,4),\dots,(n-1,n)$ followed by gates on $(2,3)(4,5),\dots,(n-2,n-1) $. For $n$ odd the circuit definition is similar.

\item We define the \textit{d-dimensional brickwork circuit} on a d-dimensional lattice as the concatenation of $d$ one-dimensional brickwork circuits arranged in direction of the basis $\{\Hat{e}_i\}_{i=1}^d$. 

\item The \textit{light cone} of an observable $O$ with respect to a quantum channel $\Phi$ is defined as $\operatorname{Light}(\Phi,O) \coloneqq \operatorname{supp}(\Phi^*(O)).$ Here $\Phi^*$ is the adjoint channel with respect to the scalar product. 

\item 
\textit{Chebyshev's inequality} states that the probability of a random variable deviating more than \(\delta\) from its mean is bounded by the ratio of the variance to \(\delta^2\). Let \(X\) be a random variable with expectation \(\mu = \mathbb{E}[X]\) and variance \(\sigma^2 = \operatorname{Var}[X]\). Then, for any real and positive \(\delta\), we have:
$
\operatorname{Pr} \left( |X - \mu| \geq \delta \right) \leq \frac{\sigma^2}{\delta^2}
$.

\item \textit{Asymptotic notation.} We use asymptotic notation to describe the growth of functions. A positive function $ f(n) $ is said to be $\mathcal{O}(g(n))$ if there exist positive constants $c_1$ and $n_0 \in \mathbb{N}$ such that $f(n) \leq c_1 g(n)$ for all $n \geq n_0$. Similarly, $f(n) = \Omega(g(n))$ if there exist positive constants $c_2$ and $n_0 \in \mathbb{N}$ such that $0 \leq c_2 g(n) \leq f(n)$ for all $n \geq n_0$. We write $f(n)= \Theta(g(n))$ if $f(n) = \mathcal{O}(g(n))$ and $f(n) = \Omega(g(n))$ \cite{cormen_introduction_2022}.

\end{itemize}
\section{Entropy of quantum states}
Current quantum computers are small and noisy. Without error correction, incoherent errors increase the entropy of the state they operate on. Here we compare the entropy of states prepared using various methods: Unprotected quantum states, states with partial error correction, and states prepared on fully error-corrected quantum computers.  The quantum circuit we consider for our comparison operates on $n$ qubits and consists of $L$ unitary layers, where $L = \Omega(n)$. We assume geometric locality, meaning that error correction is performed on a $d$-dimensional lattice with local connections on the lattice. Additionally, we impose a constraint on the number of physical qubits, requiring that it scales at most linearly with the problem size, i.e., $n_p = \mathcal{O}(n)$. The noise on the computer is modeled by depolarizing noise, applied between the unitary layers. 

We want to lower bound the entropy of a quantum state that is produced by a circuit initialized in a pure state consisting of $L$ layers of unitaries interleaved with depolarizing noise. First we consider how the entropy of a quantum state changes if the depolarizing channel is applied to a subset of qubits. We use the quantum version of Shearer's inequality.
\begin{lemma}\label{lem:Shearer}(Quantum Shearer's inequality).
Consider $t \in \mathbb{N}$ and a family $\mathcal{F} \subset 2^{\{ 1,\dots,n\}}$ of subsets of $\{1,\dots,n\}$ such that each $i \in \{1,\dots,n\}$ is included in exactly $t$ elements of $\mathcal{F}$. Then for any $\rho \in B( (\mathbb{C}^2) ^{\otimes n} )$
    \begin{align}
        S(\rho) \leq \frac{1}{t} \sum_{F\in \mathcal{F}} S(\rho \vert_F).
    \end{align}
\end{lemma}
Here we denote the reduced density matrix as $\rho \vert_F$ For proof of the Lemma, see Ref.~\cite{muller-hermes_relative_2016}. 

\begin{theorem} \label{th:entropy}
For any $\rho \in B(\mathcal{H})$, let the quantum channel $\mathcal{D}_p^{n_c} = \mathcal{N}_p^{\otimes (n - n_c)} \otimes I^{\otimes n_c}$ be a depolarizing channel that acts on the first $n - n_c$ qubits, while the identity channel $I$ acts on the remaining $n_c$ qubits. We have the following inequality for the von Neumann entropy:
    \begin{align}
        S(\mathcal{D}_p^{n_c} (\rho)) \geq (1-p) S( \rho) + p \log (2^{n-n_c}).
    \end{align}
\end{theorem}
A similar statement was %incompletely 
proven in Ref.~\cite{aharonov_limitations_1996}. We present a slightly modified proof from Ref.~\cite{muller-hermes_relative_2016}. 

\begin{proof}
We denote $n-n_c = r$. The state after the application of the depolarizing channel can be expressed as 
\begin{align}
    \mathcal{D}_p^{n_c}(\rho) = \sum_{k=0}^{r} \sum_{F\in\mathcal{F}_k}p^k (1-p)^{r-k} \left( \bigotimes_{l \in F} I/2 \otimes \rho \vert_{F^c}\right)
\end{align}
using the concavity of the von Neumann entropy the state can be further simplified
\begin{align}
    S\left(\mathcal{D}_p^{n_c}(\rho)  \right) \geq \sum_{k=0}^{r} \sum_{F\in\mathcal{F}_k}p^k (1-p)^{r-k} (k \log(2) + S(\rho \vert_{F^c}) 
\end{align}
We now use the identity $\sum_{k=0}^{r} \binom{r}{k}p^k (1-p)^{r-k} k = p r$ and apply Shearer's inequality (Lemma~\ref{lem:Shearer}). Every $i$ appears in $t = \binom{r-1}{r-k-1} = \binom{r}{r-k} \frac{r-k}{r}$ elements of $F^c$. 
\begin{align}
    S\left( \mathcal{D}_p^{n_c} (\rho)  \right) \geq &p r \log(2) 
    + \sum_{k=0}^r \binom{r}{r-k} \frac{r-k}{r} (1-p)^k p^{n-k} S(\rho) \nonumber\\
    % for right term expand fracction and use the identity
    \geq & p (n-n_c) \log(2) + (1-p) S(\rho),
\end{align}
concluding the proof. 
\end{proof}

We can now use Theorem~\ref{th:entropy} to lower bound the entropy of the state after $L$ layers
\begin{align}
    S\left(\left(\mathcal{D}_p^{n_c} (\rho) \right)^L (\rho_0)\right)  \geq & (1-p)^L S(\rho_0) 
    + \sum_{k=0}^{L-1} (1-p)^k p \log(2^{(n-n_c)}).
\end{align}
The system is initialized in a pure state, making $S(\rho_0) = 0$. Using the geometric series
\begin{align}\label{Eq:entropy_lower}
     S\left(\left(\mathcal{D}_p^{n_c} (\rho)\right)^L (\rho_0)\right)  \geq  \left(1-(1-p)^L  \right) \log(2^{(n-n_c)}).
\end{align}
The state of the circuit exponentially approaches the fully mixed state of $n-n_c$ qubits. This bound is not particularly tight, but sufficient for our purpose. Through the use of the subadditivity of the entropy we neglect errors that propagate to the error-corrected qubits. This setting could be investigated more thoroughly using techniques from random circuit sampling \cite{dalzell_random_2021} or an approach similar to \cite{bultrini_battle_2023}. 
Eq.~\eqref{Eq:entropy_lower} enables us to lower bound the entropy of the non corrected setting ($n_c=0$).

Through error correction, the errors in a quantum system can be suppressed. We assume that the operations at our disposal are geometrically local and that we have $n_p = \mathcal{O}(n)$ physical qubits. The bound derived in Ref.~\cite{bravyi_tradeoffs_2010} relates the distance and number of logical qubits of a local stabilizer code to the number of physical qubits. Specifically, $k d^\alpha \leq c n_p $ for $k$ the number of logical qubits $d$ the distance, $n_p$ the number of physical qubits and constants $c$ and $\alpha$. Now for $k = n$ and $n_p = \mathcal{O}(n) $ the distance $d$ of the code must remain constant. We assume that logical errors can be described by a depolarizing channel and that the distance of the code remains constant. Using these assumptions we can directly apply the results of the non error corrected case replacing $p$ by $p_l < p.$
\begin{align} \label{Eq:Entropy_correction}
    S((\mathcal{N}_{p_l}^{\otimes n })^L (\rho_0))  \geq  \left(1-(1-p_l)^L  \right) \log(2^n).
\end{align}
In the partially error corrected setting, we correct a subset of qubits of constant size ($k = n_c = \text{constant}$) therefore, the distance of the code can increase with the size of the system $n$. To lower bound the entropy we can assume that we have $n_c = k$ qubits that are noiseless. While the remainder is affected by depolarizing noise. Furthermore, to lower bound the entropy we assume the errors do not propagate to the error-corrected qubits. We then use Eq.~\eqref{Eq:entropy_lower} to lower bound the entropy.

We consider a simple dissipative circuit that resets $n_r = \Omega(n)$ qubits. The total number of qubits $n_p = n_r + n = \mathcal{O}(n)$. In the final layer, we swap $n_r$ qubits with system qubits. Assuming the reset and swap occur without an error, the final state has entropy less than $S \leq \log(2^{n-n_r})$.

To better understand the regimes in which dissipative algorithms may offer advantages, we perform numerical simulations. In Fig.~\ref{fig:NumericEntropy}, we plot the entropy of the states for toric code ground state preparation across various system sizes and noise strengths. Both dissipative and unitary ans\"atze, discussed in SI.~\ref{app:toric_NIBP}, are trained for systems ranging from 4 to 10 qubits, with entropy calculated for the resulting states. To enable a meaningful comparison across different system sizes, we divide by the entropy of the fully mixed state. Our results show that the entropy of states prepared dissipatively remains approximately constant, whereas that of states prepared unitarily increases with system size. However, initially, the dissipatively prepared states exhibit a higher entropy. This can be attributed to the steady-state solution being a mixture of ground states of the toric code rather than a single ground state.
\begin{figure*}
    \centering
    \includegraphics{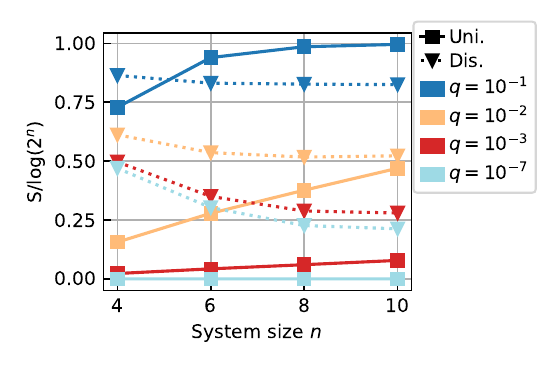} %can find in Pythonsteadystate/ToricCode/Entropy/Uni_params
    \caption{\textbf{Numerical simulation of the entropy of noisy quantum circuits.} We numerically explore how the entropy of the output state scales compared to the fully mixed state for different system sizes $n$ and unitary and dissipative ans\"atze preparing toric code ground states. Unitary entropy is represented by solid lines and square markers while dissipatively prepared states are dotted with triangular markers. The entropy of the dissipative ansatz starts at a higher value as it natively prepares a mixture of the four degenerate ground states while the unitary ansatz prepares a single ground state. For growing system sizes the entropy of the unitarily prepared state approaches that of the fully mixed state while that of the dissipatively prepared state stays constant. 
    }
    \label{fig:NumericEntropy}
\end{figure*}

In Fig.~\ref{fig:EntropySize}, we extrapolate these trends, justified by the lower bounds in Eq.~\eqref{Eq:entropy_lower} and Eq.~\eqref{Eq:Entropy_correction}, to predict behavior at larger system sizes.
\begin{figure}[ht]
    \centering
    \includegraphics{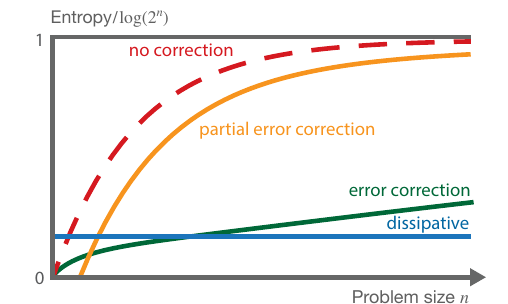}
    \caption{\textbf{Entropy of noisy quantum circuits.}
    In quantum algorithms subject to depolarizing noise, circuit depth scaling with system size leads to exponentially increasing entropy, quickly making computations unusable. Error correction is the solution, but it requires significant qubit overhead. If one only has $\mathcal{O}(n)$ physical qubits and local connectivity the distance of the code mus remain constant leading to an increase in entropy. Alternatively, one can encode a constant sized subset of the computational qubits with an increasing distance, this partial error correction still results in rising entropy. Dissipative algorithms offer a promising alternative, reducing entropy without the heavy resource demands of error correction.
    %In quantum algorithms where the circuit depth scales with system size, depolarizing noise causes the entropy of the quantum state to increase exponentially, quickly rendering the computation useless. 
    %While error correction techniques can suppress these errors, the qubit overhead poses a significant challenge. If one attempts to correct errors in only part of the system to reduce the complexity, entropy will still rise exponentially. An alternative approach involves dissipative algorithms, which replace the redundancy required for error correction with dissipative operations that can reduce the entropy of the quantum state, offering a promising pathway to scalable computation without the heavy resource demand. 
    }
    \label{fig:EntropySize}
\end{figure}

\section{Quantum computation driven by dissipation in the presence of noise} \label{app:VWC}
Dissipative quantum algorithms can be scaled in settings where purely unitary algorithms cannot scale.
The authors of Ref.~\cite{verstraete_quantum_2009} show that purely dissipative dynamics can be used to perform universal quantum computations. How does this dissipative framework perform in the presence of noise? Here, we first introduce the construction of Ref.~\cite{verstraete_quantum_2009}. For a simple classical algorithm we show that the construction is not robust to noise. Specifically, we show that the overlap with the desired output of the computation is exponentially suppressed in the number of algorithmic steps.

Consider $N$ qubits on a line and nearest-neighbor unitary operators $\{U_t\}_{t=1}^T$. The state after the application of the $t$-th unitary is $\ket{\psi_t} = U_t U_{t-1} \dots U_1\ket{0^{\otimes N}}$. The evolution is described by a purely dissipative Liouvillian in Lindblad form 
\begin{equation}
    \dot{\rho} = \mathcal{L}(\rho)= \sum_k L_k \rho L_k^\dag - \frac{1}{2}\{ L_k^\dag L_k\,\rho \}. 
\end{equation}
The authors of Ref.~\cite{verstraete_quantum_2009} design geometrically local jump operators that have a steady state $\rho$ from which the desired computational output can be extracted. Specifically, overlap with the desired computational outcome is linearly suppressed in the number of gates $T$. The local jump operators are defined as
\begin{align}\label{eq:VWC_jumps}
    L_i &= \ketbraind{0}{i}{1} \otimes \ketbra{0}{0}, \\
    L_t &= U_t \otimes \ketbra{t+1}{t} + U_t^\dag \ketbra{t}{t+1},
\end{align}
here $i = 1,\dots, N$ and $t=1,\dots, T$. The steady state of the system is a uniform mixture over the computation history:
\begin{align}
    \rho = \frac{1}{T+1} \sum_t \ketbra{\psi_t}{\psi_t} \otimes \ketbra{t}{t}.
\end{align}
The desired output of the computation can be identified by the time register being in state $T$ and can be readout with probability $1/T$. It can be shown that the gap of the system does not depend on the system size and only polynomially depends on the number of gates $T$ guaranteeing fast convergence to the steady state. 

We show for a simple classical problem in the presence of noise the overlap of the steady state with the desired output state decays exponentially with the number of steps. From this we can conclude that the construction of Ref.~\cite{verstraete_quantum_2009}, despite being dissipative, is not robust to noise. Consider $T$ qubits on a line and a set of unitaries $\{U_t\}_{t=0}^T$ defined as
\begin{align}
    U_1 &= X_0 ,\\
    U_t &= \operatorname{CNOT}_{i-1,i} \text{ for i }\neq 1.
\end{align}
The initial unitary $U_1$ flips the zeroth qubit, preparing the state $\ket{\psi_1} = \ket{1 0^{\otimes {T-1}}}$. Subsequent $\text{CNOT}$ gates propagate this excitation along the chain, yielding intermediate states such as $\ket{\psi_2} = \ket{11 0^{\otimes {T-2}}}$ and eventually the desired output state $\ket{1^{\otimes{T}}}$ is reached.
We define the jump operators according to Eq.~\eqref{eq:VWC_jumps} on a computational register of $T$ qubits and a counter register with $T$ steps. We compare the noiseless ideal setting with a setting that includes incoherent flips of the computational register
\begin{equation}
    L_\text{noise}^i = \sqrt{\kappa} X_i. 
\end{equation}
All jumps map computational basis states to other computational basis states, therefore they do not build any coherences. Initializing in a computational basis state we can perform an entirely classical Markov chain simulation of the process. In Fig.~\ref{fig:VWC_noisy} we plot the overlap with the desired output state $\ket{1^{\otimes T}} $ as a function of system size $T$. In the ideal noiseless case, the overlap decays polynomially in the number of algorithmic steps $T$, as expected from the $1/(T+1)$ distribution over time steps. However, in the presence of noise, population spreads over the full $2^T$-dimensional Hilbert space, leading to an exponential suppression of the desired output. Since the overlap with the desired computational output is exponentially suppressed that algorithm cannot be scaled in the presence of noise. We conclude that the construction of Ref.~\cite{verstraete_quantum_2009} is not robust to noise.
\begin{figure}
    \centering
    \includegraphics{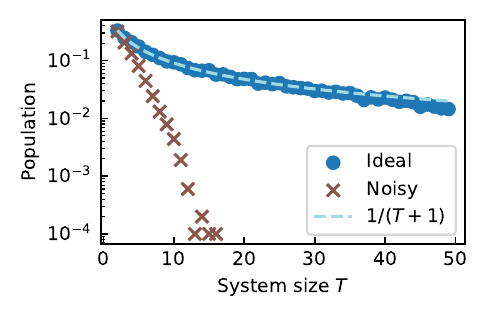}
    \caption{\textbf{Dissipative quantum computation.} In the ideal noiseless setting the overlap with the computational output state $\ket{\psi_T}$ scales as $1/(T+1)$. Noise spreads the population across $2^T$ states leading to the exponential suppression of the population in the desired output state.}
    \label{fig:VWC_noisy}
\end{figure}

\section{The Haar measure}\label{SI:Haar}
The Haar measure provides a way to define the uniform distribution over the set of unitary matrices \cite{collins_integration_2006,christandl_structure_2006,watrous2018theory}. Let $U(d)$ be the unitary group of degree $d=2^n$. 
The \textit{Haar measure} is the unique normalized and left and right invariant measure. For any integrable function $f: U(d) \mapsto \mathbb{C} $ the following holds:
\begin{align}
    \int_{U(d)} d\mu (U) f(WU) &= \int_{U(d)}d\mu
    (U) f(UW) = \int_{U(d)} d\mu (V) f(V),
\end{align}
where $W \in U(d)$. Additionally, the measure is normalized
\begin{align}
    \int_{U(d)} d\mu (U)  = 1.
\end{align}
A useful fact is that the haar measure is invariant under complex conjugation
\begin{align}
    \int_{U(d)} d\mu (U) f(U) = \int_{U(d)} d\mu (U) f(U^\dag).
\end{align}
In the following, we will use 
\begin{align}
    \mathbb{E}[f(U)] = \int_{U(d)} d\mu(U) f(U) \, ,
\end{align}
to denote the integral over the Haar measure. When we consider distributions over unitaries other than the Haar measure, we will state it explicitly. 

We state a few useful identities for integrating over the haar measure ~\cite{collins_integration_2006}. For any $O \in \mathcal{B}(\mathbb{C}^{d})$ 
\begin{equation}\label{eq:identityHaar1}
     \mathbb{E}[ U O U^\dag ]= \frac{\Tr\left[O \right]}{d} I.
\end{equation}
For any $O \in \mathcal{B}(\mathbb{C}^{2 d})$ we have:
\begin{align}\label{eq:identityHaar2}
    \mathbb{E}[ U^{\otimes 2} O U^{\dagger \otimes 2} ] = \frac{\Tr[O] I + \mathbb{F} \Tr[O \mathbb{F}]}{d^2 - 1} - \frac{\Tr[O] \mathbb{F} +\Tr[O \mathbb{F}]  I }{d(d^2 - 1)}.
\end{align}
Here, $\mathbb{F}$ is the swap operator defined as $\mathbb{F} \coloneqq \sum_{i,j} \ketbra{i,j}{j,i}$. 

\begin{definition}(Unitary t-design \cite{renes_symmetric_2004,gross_evenly_2007})
    A probability distribution over unitaries $\mathcal{E}=\{(p_i,U_i)\}$ is referred to as a t-design if and only if
    \begin{align}
        \underset{U \sim \mathcal{E}}{\mathbb{E}}\left[U^{\otimes t } O  U^{\dag \otimes t}\right]=\sum_i p_i U_i^{\otimes t } O  U_i^{\dag \otimes t }  = \underset{U \sim \text{U}(d)}{\mathbb{E}}\left[U^{\otimes t } O  U^{\dag \otimes t}\right] .
    \end{align}
    For all $O\in \mathcal{B}(\mathcal{H})$.
\end{definition}
\begin{definition}(Clifford Group \cite{gottesman_heisenberg_1998,gottesman_surviving_2016})
    The Clifford group is the set of unitaries that leaves the Pauli group $P(n)$ invariant under conjugation:
    \begin{align}
        \operatorname{CL}(n) \coloneqq \{ U \in 2^n : U P U^\dag \in P(n) \;\forall P \in P(n)\}.
    \end{align}
    It is a finite group generated by $\{\text{H, CNOT, S}\}$ \cite{gottesman_surviving_2016}.
\end{definition}
\begin{lemma}(The Clifford group forms a 2-design \cite{webb_clifford_2016})\label{lem:Clifford2design}
    The uniform distribution over the Clifford group $\operatorname{CL}(n)$ forms a 2-design.
\end{lemma}

\begin{lemma}(Pauli 2-mixing (Ref.~\cite[Lemma 14]{mele_noise-induced_2024}))\label{lem:Pauli2Mixing}
    Let $d = 2^n$ for $n\in \mathbb{N}$. For $P_1,P_2$ elements of the n-qubit Pauli group $P(n)$and $\mathcal{E}$ at least a 2-design
    \begin{align}
        \underset{U \sim \mathcal{E}}{\mathbb{E}}\left[U^{\otimes 2} (P_1 \otimes P_2 ) U^{\dag \otimes 2} \right] 
=
&\begin{cases}
    I \otimes I & \text{if } P_1 = P_2 = I, \\
    \frac{1}{d^2 - 1} \sum_{P \in \text{P}(n)  \setminus \{I\}} P \otimes P & \text{if } P_1 = P_2 \neq I, \\
    0 & \text{if } P_1 \neq P_2.
\end{cases}
    \nonumber\end{align}
\end{lemma}
\begin{proof}
    For the case $P_1 = P_2  = I$ the Haar average trivially gives $I \otimes I$. In the remaining cases, we apply Eq.~\eqref{eq:identityHaar2}. Pauli operators are traceless and therefore $\Tr\left[ P_1 \otimes P_2  \right] = 0$. 
    For the remaining terms we use the swap trick $\Tr\left[ P_1 \otimes P_1  \mathbb{F} \right] = \Tr\left[ P_1  P_2  \right] $ Due to orthogonality of the Paulis this gives $d$ for $P_1 = P_2$ and $0$ if $P_1 \neq P_2$. Summing the terms, we find
    \begin{align}
        \underset{U \sim \mathcal{E}}{\mathbb{E}}\left[U^{\otimes 2} (P_1 \otimes P_1 ) U^{\dag \otimes 2} \right]  = \frac{1}{d^2-1} \Tr\left[ P_1 \otimes P_1  \mathbb{F}\right] \mathbb{F} -  \frac{1}{d (d^2-1)}\Tr\left[ P_1 \otimes P_1  \mathbb{F}\right] I = \frac{1}{d^2-1} \sum_{P \in \text{P}(n)  \setminus \{I\}} P \otimes P.
    \end{align}
     This concludes the proof.
\end{proof}

%%%%%%%%%%%%%%%%%%%%%%%%%%%%%%%%%%%%%%%%%%%%%%%%%%%%%%%%%%%%%%%%%%%%%%%%%%%%%%%%%%%%%%%%%%%%%%%%%%%%%%%%%%%%%%%%%%%%%%%%
\begin{lemma} \label{Lem:SingleGlobalOne}(A Layer of Single-Qubit Haar Random Gates Forms a Global 1-Design (Ref.~\cite[Lemma 15]{mele_noise-induced_2024}))
Let $\mathcal{E}$ be a distribution over the tensor product of single-qubit 1-design gates, where each unitary $U$ has the form $U = \bigotimes_{i=1}^n u_i$, with $u_i$ acting on the $i$-th qubit. Then, $\nu$ forms an $n$-qubit 1-design.
\end{lemma}
\begin{proof} Let $O \in \mathcal{B}(\mathbb{C}^d)$ with $d = 2^n$. Expanding $O$ in the Pauli basis, we have  
\begin{align}
\underset{U \sim \mathcal{E}}{\mathbb{E}}[U O U^\dagger] = \frac{1}{d} \sum_{P \in \{I, X, Y, Z\}^{\otimes n}} \Tr(OP) \underset{U \sim \mathcal{E}}{\mathbb{E}}[U P U^\dagger] = \frac{\Tr(O)}{d} I^{\otimes n},
\end{align}
where the last equality follows from the first-moment formula (Eq.~\eqref{eq:identityHaar1}) applied to each qubit and the traceless property of the Pauli matrices.
\end{proof}

%%%%%%%%%%%%%%%%%%%%%%%%%%%%%%%%%%%%%%%%%%%%%%%%%%%%%%%%%%%%%%%%%%%%%%%%%%%%%%%%%%%%%%%%%%%%%%%%%%%%%%%%%%%%%%%%%%%%%%%%
\begin{lemma}\label{lem:SecondMomentSingle}
(From Ref.~\cite{mele_noise-induced_2024}: Second Moments of Single-Qubit Random Gate Layers)
Let $\mathcal{E}$ be a distribution over the tensor product of single-qubit 2-design gates, where $U = \bigotimes_{i=1}^n u_i$. For an operator $B \in \mathcal{B}(\mathbb{C}^d)$, the following hold: 
\begin{enumerate}
    \item Let $O = \sum_{P \in \{I, X, Y, Z\}^{\otimes n}} a_P P$, where $a_P \in \mathbb{R}$ for all $P$. Then, 
   \begin{align}
   \underset{U \sim \mathcal{E}}{\mathbb{E}} \big[ \Tr(O U B U^\dagger)^2 \big] = \sum_{P \in \{I, X, Y, Z\}^{\otimes n}} a_P^2 \underset{U \sim \mathcal{E}}{\mathbb{E}} \big[ \Tr(P U B U^\dagger)^2 \big]. 
   \end{align}
\item For any $P \in \{I, X, Y, Z\}^{\otimes n}$, 
   \begin{align}
   \underset{U \sim \mathcal{E}}{\mathbb{E}} \big[ \Tr(P U B U^\dagger)^2 \big] = \frac{1}{3^{|P|}} \sum_{\substack{\operatorname{supp}(Q) = \operatorname{supp}(P) \\ Q \in \{I, X, Y, Z\}^{\otimes n}} } \Tr(QB)^2.
   \end{align}
\end{enumerate}
\end{lemma}
\begin{proof} 
We aim to calculate $\underset{U \sim \mathcal{E}}{\mathbb{E}}[\Tr(O U B U^\dagger)^2]$. First, we rewrite the expression as
\begin{align}
\underset{U \sim \mathcal{E}}{\mathbb{E}}[\Tr(O U B U^\dagger)^2] = \underset{U \sim \mathcal{E}}{\mathbb{E}}[\Tr(O^{\otimes 2} U^{\otimes 2} B^{\otimes 2} (U^\dagger)^{\otimes 2})].
\end{align}
Expanding $O$ in the Pauli basis as $O = \sum_{P \in \{I, X, Y, Z\}^{\otimes n}} a_P P$, this becomes
\begin{align}
\sum_{P, Q \in \{I, X, Y, Z\}^{\otimes n}} a_P a_Q \underset{U \sim \mathcal{E}}{\mathbb{E}}[\Tr((P \otimes Q) U^{\otimes 2} B^{\otimes 2} (U^\dagger)^{\otimes 2})].
\end{align}
Using the cyclic property of the trace, we rewrite this as
\begin{align}
\sum_{P, Q \in \{I, X, Y, Z\}^{\otimes n}} a_P a_Q \underset{U \sim \mathcal{E}}{\mathbb{E}}[\Tr((U^\dagger)^{\otimes 2} (P \otimes Q) U^{\otimes 2} B^{\otimes 2})].
\end{align}
We now use that the Haar measure is invariant under complex conjugation; $\mathbb{E}_{U \sim \mathcal{E} }[f(U)] = \mathbb{E}_{U \sim \mathcal{E}}[f(U^\dag)] $ and the Pauli-mixing lemma \ref{lem:Pauli2Mixing}, we conclude that:
$
\underset{u_i \sim U(2)}{\mathbb{E}}\left[u_i^{\otimes 2} (P_1 \otimes P_2) (u_i^\dagger)^{\otimes 2}\right] = 0,
$
for two different single-qubit Pauli operators \(P_1\) and \(P_2\). The term simplifies to
\begin{align}
\sum_{P \in \{I, X, Y, Z\}^{\otimes n}} a_P^2 \underset{U \sim \mathcal{E}}{\mathbb{E}}[\Tr(P U B U^\dagger)^2],
\end{align}
proving the first statement of the Lemma.% We retain the integral over the measure using the left and right invariance $\mathbb{E}_{U \sim \mathcal{E} }[f(U)] = \mathbb{E}_{V \sim \mathcal{E}}\mathbb{E}_{U \sim \mathcal{E}}[f(UV)] $.

To prove the second statement we express $P$ as a product of single-qubit Pauli matrices $P_i$ $P = P_1 \otimes P_2 \otimes \cdots \otimes P_n$. And factor $U = \bigotimes_{i=1}^n u_i$ as a tensor product of single-qubit unitaries. To conveniently write the expression define the permutation operator $\Pi$, which permutes the entries of the tensor product; $ \Pi^\dag (P_1 \otimes P_2)^{\otimes 2} \Pi =\Pi^\dag (P_1\otimes P_2 \otimes P_1 \otimes P_2) \Pi= P_1 \otimes P_1 \otimes P_2 \otimes P_2$. 
\begin{align}
    \underset{U \sim \mathcal{E}}{\mathbb{E}}[\Tr(P U B U^\dagger)^2] = \underset{U \sim \mathcal{E}}{\mathbb{E}}[\Tr(P^{\otimes 2} U^{\otimes 2} B^{\otimes 2} U^{\dagger \otimes 2})] = \Tr\left(\Pi \underset{U \sim \mathcal{E}}{\mathbb{E}}\left[\bigotimes_{i=1}^{n} u_i^{\otimes 2}(P_i \otimes P_i) \; \ u_i^{\dag\otimes 2} \right] \Pi^\dag B^{\otimes 2} \right)
\end{align}
To this expression we can apply Pauli 2-mixing (Lemma~\ref{lem:Pauli2Mixing}). 
\begin{align}
\underset{U \sim \mathcal{E}}{\mathbb{E}}[\Tr(P U B U^\dagger)^2] = \Tr \left( \left( \bigotimes_{i \in \operatorname{supp}(P)} \frac{1}{3} \sum_{Q_i \in \{X,Y,Z\}} Q_i \otimes Q_i\right) B^{\otimes 2} \right)\\
= \frac{1}{3^{\abs{P}}} \sum_{\substack{Q \in \{I,X,Y,Z\}^{\otimes n}\\ \operatorname{supp}(Q) = \operatorname{supp(P)}}} \Tr \left( Q^{\otimes 2} B^{\otimes 2}\right) \nonumber
.
\end{align}
\end{proof}
%%%%%%%%%%%%%%%%%%%%%%%%%%%%%%%%%%%%%%%%%%%%%%%%%%%%%%%%%%%%%%%%%%%%%%%%%%%%%%%%%%%%%%%%%%%%%%%%%%%%%%%%%%%%%%%%%%%%%%%%

\section{Quantum channels and the coherence vector picture}
A \emph{quantum channel} is a completely positive, trace-preserving (CPTP) map $\Phi: \mathcal{B}(\mathcal{H}) \to \mathcal{B}(\mathcal{H})$. 
The map $\Phi$ satisfies complete positivity: For any integer $n \geq 1$, the extended map 
    \begin{align}
    \Phi \otimes I^{\otimes n}: \mathcal{B}(\mathcal{H} \otimes \mathbb{C}^n) \to \mathcal{B}(\mathcal{H} \otimes \mathbb{C}^n)
    \end{align}
is positive, where $I^{\otimes n}$ is the identity map on $\mathbb{C}^n$.
Furthermore, it preserves the trace of the density matrix: For all $X \in \mathcal{B}(\mathcal{H})$, 
\begin{align}
    \Tr[\Phi(X)] = \Tr[X].
\end{align}
The \emph{Kraus representation} of a quantum channel $\Phi$ is given by a set of at most $d^2$ operators $\{ K_i \}_{i=1}^{d^2}$ acting on $\mathcal{H}$ \cite{kraus_states_1983}. The action of the channel is 
\begin{align}
\Phi(\rho) = \sum_i K_i \rho K_i^\dagger,
\end{align}
where $\rho \in \mathcal{B}(\mathcal{H})$ is a density matrix and $K_i$ are the Kraus operators that satisfy the completeness relation
\begin{align}
\sum_i K_i^\dagger K_i = I.
\end{align}
The Kraus representation guarantees the complete positivity and trace-preserving properties of $\Phi$.
Given a quantum channel $\Phi$, we say that $\Phi$ is \emph{unital} if and only if it maps the identity operator to the identity operator, i.e., 
$\Phi(I) = I.$ Otherwise, we say that $\Phi$ is \emph{non-unital}.

\begin{lemma} (Unital channels can only decrease purity)\label{lem:PurityUnital}
    Given a unital quantum channel $\Phi$ and a density matrix $\rho \in \mathcal{B}(\mathcal{H})$ with purity $P = \Tr(\rho^2)$ then $P' = \Tr (\Phi(\rho)^2)\leq 1$. Equality holds iff $\Phi$ is unitary.
\end{lemma}
For proof, see Ref. \cite{singkanipa_beyond_2025}.

Given a quantum channel $\Phi: L(\mathbb{C}^d) \to L(\mathbb{C}^d)$, its adjoint map $\Phi^*: L(\mathbb{C}^d) \to L(\mathbb{C}^d)$ is defined as the linear map such that
\begin{align}
\langle \Phi^*(A), B \rangle = \langle A, \Phi(B) \rangle,
\end{align}
for any $A, B \in \mathcal{B}(\mathcal{H})$, where $\langle \cdot, \cdot \rangle$ denotes the Hilbert-Schmidt inner product. If $\{ K_i \}_{i=1}^{d^2}$ is a set of Kraus operators for $\Phi$, then the adjoint channel $\Phi^*$ can be expressed as
\begin{align}
\Phi^*(\cdot) = \sum_{i=1}^{d^2} K_i^\dagger (\cdot) K_i.
\end{align}
The adjoint channel $ \Phi^* $ is always unital, that is, $ \Phi^*(I) = I $, which follows from trace preservation. However, the adjoint is not necessarily trace-preserving: it holds that $ \Phi^* $ is trace-preserving if and only if $ \Phi$ is unital.
If the Kraus operators of the quantum channel $\Phi $ are Hermitian, then the adjoint channel coincides with the quantum channel, $\Phi^* = \Phi $.
If $ \Phi_1 $ and $ \Phi_2 $ are two quantum channels, then
\begin{align}
(\Phi_1 \circ \Phi_2)^* = \Phi_2^* \circ \Phi_1^*.
\end{align}
Moreover,
\begin{align}
(\Phi_1 \otimes \Phi_2)^* = \Phi_1^* \otimes \Phi_2^*.
\end{align}

\subsection{The coherence vector picture}\label{app:coherence_vec_pic}
On the space of bounded linear operators $B(\mathcal{H})$ we can choose a basis $\{ F_j\}_{j=0} ^{d^2-1}$ that satisfies $F_0 = I$ and $\Tr (F_j )  = 0 \;\; \forall j > 0$. Additionally, the basis vectors should be orthonormal with respect to the inner product.
\begin{align}
    F_j = F_j^\dag, \quad \langle F_k , F_j\rangle = \frac{1}{d} \Tr \left( F_k F_j\right) = \delta_{jk}  \quad \forall j,k.
\end{align}
The Pauli group, $P(n)$, is an example of such a basis for $n$ qubits.

With the help of this basis, we can represent any density matrix as:
\begin{align}
    \rho = \frac{1}{d}  \left( F_0 + \sum_{j=1} ^{d^2 -1} v_j F_j \right) = \frac{1}{d} \left( F_0 + \bm{v} \cdot \bm{F} \right),
\end{align}
where we introduced the vector notation $\bm{F} = \{F_1,\dots,F_{d^2-1}\}$, and refer to $\bm{v}=\{ v_1,\dots,v_{d^2-1}\},$ as the \textit{Bloch} or \textit{coherence vector.}
Any quantum channel $\Phi$ transforms the state to
\begin{align}
    \Phi(\rho) = \frac{1}{d} \left( F_0 + \bm{v'} \cdot \bm{F} \right).
\end{align}
The transformation of the coherence vector contains a linear and an affine part
\begin{align}
    \bm{v}' = M \bm{v} + \bm{c}.
\end{align}
Their entries can be calculated according to
\begin{align}
    M_{ij} &= \langle F_i,\Phi(F_j)\rangle, \label{eq:MsmallDef}\\
    c_j &= \langle F_j , \Phi(I)\rangle.\label{eq:cDef}
\end{align}
The purity of a state can be written as 
\begin{align} \label{eq:PurityCoherenceVector}
    P \coloneqq \Tr \left( \rho^2 \right) = \frac{1}{d} \left( 1 + \norm{v}^2\right).
\end{align}

%%%%%%%%%%%%%%%%%%%%%%%%%%%%%%%%%%%%%%%%%%%%%%%%%%%%%%%%%%%%%%%%%%%%%%%%%%%%%%%%%%%%%%%%%%%%%%%%%%%%%%%%%%%%%%%%%%%%%%%%%%%%%%
 \begin{lemma} (Properties of unital channels) \label{lem:UnitalChannelProp}
 For unital channel $\Phi$ it holds:
 \begin{align}
     \bm{c} &= 0\label{eq:cisZero},\\
     \norm{\bm{v}'} &= \norm{M \bm{v}} \leq \norm{ \bm{v}}.
 \end{align}
equality holds iff $\Phi$ is unital. 
 \end{lemma} 
\begin{proof} To prove the first point observe $c_j = \langle F_j , \Phi( I ) \rangle = \langle F_j , I\rangle = 0$. 
 We now consider the second point. Using Eq.~\eqref{eq:PurityCoherenceVector} we can rewrite the norm of the coherence vector in terms of the purity
\begin{align}
    \norm{\bm{v'}}^2 = d P -1
\end{align}
Using Lemma~\ref{lem:PurityUnital} we know that $P' \leq P$ and consequently $\norm{\bm{v'}} = \norm{M \bm{v}} \leq \norm{\bm{v}}$.
From Lemma~\ref{lem:PurityUnital} we have $P' = P$ iff $\Phi$ is unitary. The condition $P' = P$ is equivalent to $\|M v\| = \|v'\| = \|v\|$, which therefore holds for all $v$ if and only if $\Phi$ is unitary.
\end{proof}
%%%%%%%%%%%%%%%%%%%%%%%%%%%%%%%%%%%%%%%%%%%%%%%%%%%%%%%%%%%%%%%%%%%%%%%%%%%%%%%%%%%%%%%%%%%%%%%%%%%%%%%%%%%%%%%%%%%%%%%%%%%%%%

\begin{corollary}(Transfer matrix of unitary is orthogonal)
    For $U \in \mathcal{B}(\mathcal{H})$ a unitarUnity transformation. Let M be the corresponding transformation of the coherence vector. Then $M$ is orthogonal; $M^TM = I$.
\end{corollary}
\begin{proof} Due to Lemma~\ref{lem:UnitalChannelProp} we know that the channel leaves the norm of the coherence vector invariant. Consequently, it is orthogonal.
%Due to the cyclicity of the trace a unitary transformation leaves the purity invariant: \begin{align}P' = \Tr \left( (U \rho U ^\dag )^2\right) = \Tr \left( \rho^2 \right) = P.\end{align}We know $\norm{v'}^2 = d \,P' -1 = \norm{v}^2$, consequently a unitary transformation is norm preserving. This implies it is orthogonal.
 \end{proof}
 %%%%%%%%%%%%%%%%%%%%%%%%%%%%%%%%%%%%%%%%%%%%%%%%%%%%%%%%%%%%%%%%%%%%%%%%%%%%%%%%%%%%%%%%%%%%%%%%%%%%%%%%%%%%%%%%%%%%%%%%%%%%%%
 
\begin{definition}[Contractive]\label{def:contractive}
A (finite-dimensional) map $\mathcal{N}$ is called HS-contractive if there exists an $r < 1$ such that for all states $\rho_1 \neq \rho_2$, the inequality $\|\mathcal{N}(\rho_1) - \mathcal{N}(\rho_2)\|_2 \leq r \| \rho_1 - \rho_2 \|_2$ holds.
\end{definition}
In the coherence vector picture the contractivity condition is equivalent to $\norm{M\bm{v}}\leq r \norm{\bm{v}}$ for $r <1$ \cite{singkanipa_beyond_2025}. The depolarizing channel is an example of such a contractive map.

\subsection{Single-qubit channels and normal form}
We consider a single-qubit system with $F_0 = I$ and $\bm{F} = \{X, Y, Z\}$. Then any density matrix can be represented as $\rho = 1/2(F_0 + \bm{v} \cdot \bm{F})$ with $\bm{v}\in \mathbb{R}^3$ and $\norm{\bm{v}}\leq 1 $.
We define the \textit{Pauli transfer matrix} as
\begin{align}
    \bm{M} = \begin{pmatrix}
1 & 0 \\
\bm{c} & M
\end{pmatrix}
\end{align}
where in accordance with Eq.~\eqref{eq:MsmallDef} and Eq.~\eqref{eq:cDef}
\begin{align}
    \bm{M}_{i,j} = \Tr\left( F_i, \Phi(F_j)\right).
\end{align}
The action of the channel on a Pauli matrix $F_j$
\begin{align}
    \Phi(F_j) = \sum 
    _{i = 0}^3\bm{M}_{i,j} F_i.
\end{align}
Using the Pauli transfer matrix we can find a concise representation of any channel. 
\begin{lemma}(Normal form)\label{lem:NormalForm}
    For any quantum channel $\Phi$ there exist $U,V \in U(2)$ such that for all $\rho \in \mathcal{B}(\mathcal{H})$
    \begin{align}
        \Phi(\rho) = U \Tilde{\Phi} (V \rho V^\dag) U^\dag.
    \end{align}
    And $\Tilde{\Phi}$ has Pauli transfer matrix of the form 
    \begin{align}
        \Tilde{\bm{M}} = \begin{pmatrix}
1 & 0 & 0 & 0 \\
c_X & D_X & 0 & 0 \\
c_Y & 0 & D_Y & 0 \\
c_Z & 0 &0 & D_Z
\end{pmatrix}
    \end{align}
    This is known as the \textit{normal form} of the channel. Furthermore, the parameters satisfy $\norm{\bm{c}}_2^2 \leq 1 $ and $D_P \leq 1.$
 \end{lemma}
\begin{proof}
For proof that unitaries exist to bring the channel in normal form see Ref.~\cite{king_minimal_2001,beth_ruskai_analysis_2002,mele_noise-induced_2024}. Let $\Phi$ be a channel in normal form with transfer matrix $M$ and $\rho = \frac{1}{2}(F_0 + \bm{v} \cdot \bm{F})$ be an arbitrary single-qubit state. Any physical state satisfies $\Tr(\rho^2)\leq 1$, this implies $\norm{\bm{v}}^2 \leq 1$. Since $\Phi$ is CPTP, it transforms $\rho$ into another valid quantum state with bounded purity: $\norm{M \bm{v} + \bm{c}}^2  = \sum_{Q \in \{X,Y,Z\}}(c_Q + D_Q v_Q)^2\leq 1 $. In particular, choosing $\bm{v} = \bm{0}$ gives $\norm{\bm{c}}^2 \leq 1$. Setting $\bm{v} = (\pm 1,0,0)$ in turn gives $(c_X \pm D_X)^2 \leq 1$ and thereby implies $c_X^2+ D_X^2 \leq 1$.
\end{proof}

 %%%%%%%%%%%%%%%%%%%%%%%%%%%%%%%%%%%%%%%%%%%%%%%%%%%%%%%%%%%%%%%%%%%%%%%%%%%%%%%%%%%%%%%%%%%%%%%%%%%%%%%%%%%%%%%%%%%%%%%%
\begin{lemma}\label{lem:TP_channel_evsmallerone}
Tensor products of single-qubit channels have eigenvalues smaller than one
\end{lemma}
    \begin{proof} Bringing the single-qubit channels to normal form is equivalent to conjugating the transfer matrix by orthogonal matrices corresponding to the unitary transformations $U$ and $V$. This operation leaves the magnitude of the eigenvalues invariant. Therefore, we can consider the eigenvalues of tensor products of quantum channels in normal form. A matrix in normal form is lower triangular with diagonal entries smaller than one (Lemma~\ref{lem:NormalForm}). For a lower triangular matrix the eigenvalues are simply the diagonal entries. The tensor product of lower triangular matrices is again lower triangular with the product of the diagonal entries on the diagonal. As all entries are smaller than one their product is also smaller than one. Consequently, the eigenvalues of tensor products of single-qubit channels are smaller than one.
\end{proof}
%%%%%%%%%%%%%%%%%%%%%%%%%%%%%%%%%%%%%%%%%%%%%%%%%%%%%%%%%%%%%%%%%%%%%%%%%%%%%%%%%%%%%%%%%%%%%%%%%%%%%%%%%%%%%%%%%%%%%%%%
We will focus our attention to two quantum channels representing unital and non-unital noise. The archetypical unital channel is the depolarizing channel. It maps density matrix $\rho$ to
\begin{align}
    \mathcal{N}_p ( \rho ) = (1-p) \rho + p \frac{1}{2} I.
\end{align}
It drives all states uniformly towards the fully mixed state making the fully mixed state a stationary state and the channel unital. Visualized on the Bloch sphere, it can be interpreted as uniformly shrinking all directions of the Bloch sphere. It has normal form parameters $\bm{c} = (0,0,0)$ and $\bm{D}= (1-p,1-p,1-p)$. As $\bm{c}=0$ the depolarizing channel is a unital. 

The amplitude damping channel is the archetypical non-unital channel. It is defined via the Kraus operators:
\begin{align}
    K_1 = \begin{pmatrix}
        0 & \sqrt{q} \\
        0 & 0 
    \end{pmatrix},
    \quad 
    K_2 = \begin{pmatrix}
        1 &0 \\
        0 & \sqrt{1-q}
    \end{pmatrix}.
\end{align}
When applied to a density matrix, we obtain:
\begin{align}\label{eq:Ampdampdef}
    \operatorname{Ampdamp}_q(\rho) = K_1 \rho K_1^\dag + K_2 \rho  K_2 ^\dag  = \begin{pmatrix}
        \rho_{0,0} + q \rho_{1,1} & \sqrt{1-q}\rho_{0,1}  \\
         \sqrt{1-q}\rho_{1,0} & (1-q) \rho_{1,1}
    \end{pmatrix}.
\end{align}
For $q=1$ the channel maps all states to the pure state $\ket{0}.$ The representation in the normal form is $\bm{c} = (0,0,q)$ and $\bm{D} = (\sqrt{1-q},\sqrt{1-q},1-q)$. 
This operation is readily available on all platforms proposed for quantum computing as a quantum computer must be capable of initializing the system in a well-defined state \cite{sohn_topics_1997,kastler_Quelques_1950,wineland_Optical_1985,magnard_fast_2018}. Typically, this initialization is realized via the amplitude damping channel.
\section{Dissipative circuit model}\label{sec:DisCircuitModel}
We investigate quantum circuits consisting of unitary gates and unital and non-unital noise. The system evolution is captured by the channel:
\begin{align}
    \Phi^M = \Phi_{\left[ LM,1\right]},
\end{align}
where
\begin{align}
    \Phi_{\left[ a,b\right]} = \left(\mathcal{V}^a\circ  \mathcal{N}\circ \mathcal{U}^a \circ\mathcal{A}^{\chi(a)}\right) \circ \dots \circ \left(\mathcal{V}^b\circ  \mathcal{N}\circ \mathcal{U}^b \circ \mathcal{A}^{\chi(b)}\right).
\end{align}
Here, $\mathcal{V}^j(\cdot) = V^j (\cdot) V^{j\dag} $ denotes layers of single-qubit gates and each $\mathcal{U}^j = U^j (\cdot) U^{j\dag} $ consists of one layer of a $d$ dimensional brickwork circuit. %After every layer of unitaries apply noise so this is not $d$ unitaries but just one.
$\mathcal{A}$ is a channel that resets $n_r$ equidistant qubits by applying an amplitude damping channel to them. Specifically, on a $d$ dimensional lattice the distance between reset qubits is less than $\ceil{ \left( \frac{n}{n_r}\right)^{1/d}}$ and more than $\floor{ \left( \frac{n}{n_r}\right)^{1/d}}$ in every direction. We denote the probability of the qubit being reset $q$ (It maps the fully mixed state of a single-qubit to a state with fidelity $(1+q)/2$ with state $\ket{0}$, 
see definition of the amplitude damping channel Eq.~\eqref{eq:Ampdampdef}). The function $\chi$ ensures that the amplitude damping channels are applied after $L$ layers
\begin{align}
\chi(l) = \begin{cases} 
1 & \text{if l } \bmod L = 0, \\
0 & \text{otherwise.}
\end{cases}
\end{align}
We refer to a block of $L$ layers as a \textit{jump}, ma rking the interval between consecutive reset operations.
We focus on the average-case behavior of these circuits, assuming they are sampled from a distribution that forms a unitary two-design---that is, one that matches the uniform (Haar) distribution over unitaries up to the second moment  \cite{renes_symmetric_2004,gross_evenly_2007}. The channel $\mathcal{N}$ models the noise. It is a $n$ qubit tensor product of unital single-qubit contractive channel that is distinct from the fully depolarizing channel. Specifically, there exists a $Q \in \{X,Y,Z\} :   D_Q \neq 0.$

\begin{figure}
    \centering
    \includegraphics{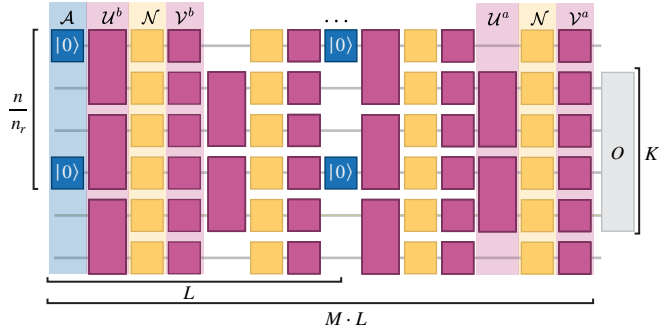}
    \caption{\textbf{Circuit model} The circuit model in one dimension. The circuit consists of channel $\mathcal{A}$ resetting equidistant qubits, two-qubit unitaries $\mathcal{U}$, noise $\mathcal{N}$ and single-qubit unitaries $\mathcal{V}$. We show $\Phi_{[a,b]}$. Here $\chi(b) = 1$, and $\chi(a) = 0$ therefore we apply a reset in the first shown layer but not in the last. The depth of a jump, referring to the number of layers between consecutive resets, is $L = 2$. The fraction of qubits that is reset is described by $n_r/n$ and $M$ captures the number of jumps that are applied. $O$ is the observable that is measured, it has diameter $K$.}
    \label{fig:circuit_model}
\end{figure}
Our proofs rely on peeling back the layers of $\Phi$ one by one. To conveniently express this we denote 
\begin{align}
    \Phi_{\left[ a,b\right]}' = \left( %\mathcal{V}^a \circ %this is removed
     \mathcal{N}\circ \mathcal{U}^a \circ\mathcal{A}^{\chi(a)}\right) \circ \dots \circ \left(V^b\circ  \mathcal{N}\circ \mathcal{U}^b \circ \mathcal{A}^{\chi(b)}\right).
\end{align}
as $\Phi_{\left[ a,b\right]}$ with the last layer of single-qubit gates $V^a_{\text{single}}$ removed. Similarly, $\Phi''_{\left[ a,b\right]}$, has both the single-qubit gates and noise $\mathcal{N}$ removed and so on. Finally, adding four dashes is equivalent to removing an entire layer: $\Phi''''_{\left[ a,b\right]} = \Phi_{\left[ a-1,b\right]}$.

%%%%%%%%%%%%%%%%%%%%%%%%%%%%%%%%%%%%%%%%%%%%%%%%%%%%%%%%%%%%%%%%%%%%%%%%%%%%%%%%%%%%%%%%%%%%%%%%%%%%%%%%%%%%%%%%%%%%%%%%%%%%%%%%%%%%%%%%%%%%%%%%%%%%%%

%%%%%%%%%%%%%%%%%%%%%%%%%%%%%%%%%%%%%%%%%%%%%%%%%%%%%%%%%%%%%%%%%%%%%%%%%%%%%%%%%%%%%%%%%%%%%%%%%%%%%%%%%%%%%%%%%%%%%%%%%%%%%%%%%%%%%%%%%%%%%%%%%%%%%%

\section{Variance lower bound}\label{app:Variance_Lower_app}
In this section, we establish a lower bound for the variance of observables measured after the application of a dissipative quantum circuit. 
\begin{proposition}(Lower bound of the variance) \label{prop:variance_lowerbound}
     Consider a circuit as described in Sec.~\ref{sec:DisCircuitModel} defined on a lattice of $n$ qubits and dimension $d$. Every $L$ layers, $n_{r}$ equidistant qubits are reset with probability $q$. Let $O \in \mathcal{B}(\mathcal{H}) $ be an observable with $\operatorname{diam}(O) = K$ and Pauli decomposition $O= \sum_{P} a_P P $. We can lower bound the variance after $M \in \mathbb{N}$ layers by:
    \begin{align}
    \operatorname{Var} \left[ \Tr\left(O\, \Phi^M\!(\rho_0)\right)\right] \geq \sum a_P^2 \frac{1}{C^\Lambda} D_{\max}^{2 {K^d \,\Delta}} q^{2 \Gamma + 2}.
\end{align}
     Here, $\Lambda=\max\left[   \delta^d \left( d (\delta-1)  + L\right),K^d(d (K-1) + L)\right]$ describes the volume of gates one fixes with $\delta = \frac{1}{2}\left(\frac{n}{n_r}\right)^{1/d} \!\!+ \,\frac{K}{2} $. $\Gamma =  \frac{ n_r }{n} \frac{K^d(d(K-1)+L)}{L}$ counts the number of qubits reset in this volume.  $\Delta =\max \left[ d (\delta-1)  + L,d (K-1)+L\right]
    $, denotes the depth needed to reach the reset qubit. The constant $C$ is; $C \leq \abs{\operatorname{CL}(2)}   \abs{\operatorname{CL}(1)}$ and $D_\text{max}$ is the largest diagonal entry of the noise channel in normal form.
\end{proposition}

\begin{figure}
    \centering
    \includegraphics{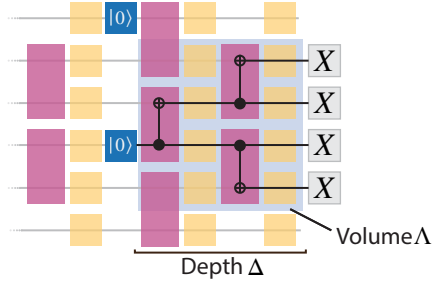}
    \caption{\textbf{Operator transformation example} A circuit illustrating how two-qubit Clifford gates can be fixed to transform observables. Specifically, the circuit maps the observable $X^{\otimes 4}$ to $I \otimes I \otimes X \otimes I$ where the $X$ acts on a reset qubit. In every circuit layer the support can be shrunk by at least one in one dimension. The required gate depth and the volume of fixed gates enabling this transformation are highlighted.}
    \label{fig:Variance_examplecircuit}
\end{figure}

\begin{proof} We lower bound the variance after applying $M$ jumps. 
First we consider the expectation value of the observable. Using Lemma~\ref{Lem:SingleGlobalOne} for the final layer of single-qubits gates:
\begin{align}
    \mathbb{E}\left[ \Tr \left( O\, \Phi^M\!(\rho_0)\right)\right] = a_{I^{\otimes n}} = \Tr(O) .\label{eq:Expectationvalue}
\end{align}
We now calculate the variance $
\operatorname{Var} \left[ \Tr ( O\, \Phi^M\!(\rho_0)) \right] =\mathbb{E} \left[ \Tr \left( O\, \Phi^M\!(\rho_0)\right)^2 \right] -\mathbb{E}\left[ \Tr \left( O\Phi^M(\rho_0)\right)\right]^2
$.
By applying the first point of Lemma~\ref{lem:SecondMomentSingle} to the last layer of gates we can rewrite:
\begin{align}
\mathbb{E}\big[\Tr(O\, \Phi^M\!(\rho_0))^2\big] &= \sum_{P \in \{I, X, Y, Z\}^{\otimes n}} a_P^2 \, \mathbb{E}\big[\Tr(P \Phi'^M(\rho_0))^2\big] \nonumber\\
&= a_{I^{\otimes n}}^2 + \sum_{P \in \{I, X, Y, Z\}^{\otimes n} \setminus \{I^{\otimes n}\}} a_P^2 \, \mathbb{E}\big[\Tr(P \Phi'^M(\rho_0))^2\big]. 
\end{align}
Using Eq.~\eqref{eq:Expectationvalue}, the variance evaluates to:
\begin{align}
      \operatorname{Var} \left[ \Tr ( O\, \Phi^M\!(\rho_0)) \right] = \sum_{P \in \{I, X, Y, Z\}^{\otimes n} \setminus \{I^{\otimes n}\}} a_P^2 \, \mathbb{E}\big[\Tr(P \Phi'^M(\rho_0))^2\big]. \label{eq:Varaince_writtenout_SI}
\end{align}
The squared expectation values of the Pauli strings can be treated separately. The unital noise channel $\mathcal{N}$ is sandwiched between Haar random gates allowing us to use their unitary invariance to bring the channel to normal form. The action of the adjoint noise channel $\mathcal{N}^*$ on a Pauli string is
\begin{align}
    \mathcal{N}^*(Q) = D_Q Q.
\end{align}
Since $\mathcal{N}$ is not fully depolarizing, the maximal factor satisfies $D_{\max} = \max_{Q \in \{X,Y,Z\}}(D_Q) > 0$.  We make use of the fact that the Clifford group forms a 2-design by Lemma~\ref{lem:Clifford2design}. While dealing with second moment quantities we can replace the integral over the Haar measure by a sum over the Clifford group. For any real function $g: U(2) \mapsto \mathbb{R}$, it is possible to lower bound the expectation value by the value for a single element of the group divided by the order of the group;
\begin{align}
    \underset{U \sim U(2)}{\mathbb{E}}\left[ g(U)^2\right] = \underset{U \sim \operatorname{CL}(2)}{\mathbb{E}}\left[ g(U) ^2\right] = \frac{1}{\abs{\operatorname{CL}(2)}}\sum_{U \in \operatorname{CL}(2)} g(U)^2 \geq \frac{1}{\abs{\operatorname{CL}(2)}} g(C_\text{fixed})^2. \label{eq:Lower_Clifford_fixed}
\end{align}
 We focus our attention on the Pauli operator $Q_\text{max}$ with maximal $D_Q$ that retains the highest magnitude under the action of the channel. There exists a operator in the Clifford group that transforms the nontrivial Pauli operators that make up $P$ to $Q_\text{max}$. Specifically, the Hadamard and S gates allow us to transform between the different Pauli operators,  $X = H^\dag Z H$ and $X = S^\dag Y S$. We can then lower bound
 \begin{align}
     \mathbb{E}\big[\Tr\left(P \Phi'^M(\rho_0)\right)^2\big] 
     &\geq \frac{1}{\abs{\operatorname{CL}(1)}^{ \vert P \vert}} \mathbb{E}\left[ \Tr (\mathcal{N}^{*\otimes n}(Q_\text{max}^{\otimes \operatorname{supp}(P)} )  \Phi''_{\left[LM,1 \right]}(\rho_0))^2\right] \\
         & = \left(\frac{D_{\operatorname{max}}^2}{\abs{\operatorname{CL}(1)}}  \right)^{  \vert P \vert} \mathbb{E}\left[\Tr\left( Q_\text{max}^{\otimes \operatorname{supp}(P)}\Phi''_{\left[LM,1 \right]}(\rho_0)\right)^2 \right].
 \end{align}
In each layer we pick up a factor larger than $D_\text{max}^{2\abs{P}}$ due to the noise channel $\mathcal{N}$ and $\text{CL}(1)^{-\abs{P}}$ due to fixing the single qubit gates. In total we accumulate a factor 
\begin{align}
    \left(\frac{D_{\operatorname{max}}^2}{\abs{\operatorname{CL}(1)} } \right)^{  \vert P \vert} \geq \left(\frac{D_{\operatorname{max}}^2}{\abs{\operatorname{CL}(1)} } \right)^{ K } .
\end{align}
We fix two local gates to backward evolve the $\abs{P}$-local observable to an observable on a single reset qubit, giving a total factor of $C \leq \abs{\operatorname{CL}(2)}   \abs{\operatorname{CL}(1)}$. We illustrate this process for the Pauli string $X^K$ in Fig.~\ref{fig:Variance_examplecircuit}. At each layer, we apply the transformation
$
\left[I^{\otimes K-1} \otimes \text{CNOT}^\dag \right] X^{\otimes K} \left[ I^{\otimes K-1} \otimes \text{CNOT} \right] = X^{\otimes K-1} \otimes I,
$ which effectively reduces the support of the observable by one qubit per layer.  
In the following we determine the necessary depth to transform the operator $P$ to a $Z$ operator on one of the qubits that is reset. 

For this transformation we need to fix the gates in the path. There are two cases to distinguish:
\begin{enumerate}
    \item No qubit in the path of the Pauli is reset. 
    \item One or multiple qubits are reset. 
\end{enumerate}
In the first case, no qubit in $\operatorname{supp}(P)$ is reset and we need to establish a connection to the closest reset qubit. We choose any direction $\Hat{e}_i$. 
%In this direction the center of the operator $P$ is at most $\frac{1}{2}\left(\frac{n}{n_r}\right)^{1/d}$ away from the next reset qubit. 
In this direction in the worst case the operator is positioned symmetrically between two reset qubits. Consequently, the edge of the operator is at most $ \delta = \frac{1}{2}\left(\frac{n}{n_r}\right)^{1/d} \!\!+ \,\frac{K}{2}$ away. 
In every unitary step we can reduce the support of the operator by at least one in that dimension and we need to shrink the operator in $d$-dimensions. Therefore, the depth needed to transform the Pauli to a single-qubit Pauli will be at most $d (\delta-1)$. In addition, we need to fix the gates until the next reset occurs, which can be at most $L$ layers away. This means the total depth to contract the operator on a single-qubit is at most $ d (\delta-1) + L$. The volume of fixed gates is given by the product of the area and the depth. Since the area is upper bounded by $\delta^d$, this yields an upper bound on the volume of $\delta^d \left( d (\delta-1)  + L\right)$.

In the second case, a qubit in $\operatorname{supp}(P)$ is reset. We can transform the operator to a single-qubit Pauli on this qubit. This requires a depth smaller than $d(K-1) + L$ and has volume smaller than $K^d(d(K-1) + L)$. The maximum number of reset qubits in this volume are $\frac{K^d}{n/n_r} \frac{d(K-1)+L}{L}$.
We obtain the following upper bounds on the extent of gates we need to fix:
\begin{align}
    \Delta &\leq \max \left[ d (\delta-1)  + L,d (K-1)+L\right] , \\
    \Lambda &\leq \max\Biggr[   \delta^d \left( d (\delta-1)  + L\right),K^d(d (K-1) + L)\Biggr],\label{eq:contractionparams_SI}\\
    \Gamma &\leq \frac{ n_r }{n} \frac{K^d(d(K-1)+L)}{L}.
\end{align}

If a qubit in the path is reset we need to adjust the Clifford operations accordingly. Suppose that a qubit is reset at depth $i$. We choose the Clifford before to map $Q \rightarrow Z$.  The backward-evolved operator at this point then takes the form $Z \otimes R$, where $R$ is some Pauli string. The action of the reset then results in:
\begin{align}
       \mathbb{E}&\left[ \Tr\left(\mathcal{A}^* (Z \otimes R)\Phi_{\left[ L M - i,1 \right]}(\rho_0)\right) \right]
    =\mathbb{E}\left[ \Tr\left( ((q I+(1-q)Z) \otimes R)\Phi_{\left[ L M - i,1 \right]}(\rho_0)\right)^2 \right] \nonumber\\
    &=\mathbb{E}\left[ q^2\Tr\left( (I \otimes R)\Phi_{\left[ L M - i,1 \right]}(\rho_0)\right)^2\right]
    +(1-q)^2\mathbb{E}\left[\Tr\left( (Z \otimes R)\Phi_{\left[ L M - i ,1\right]}(\rho_0)\right)^2\right] \nonumber\\
    &+q(1-q)\mathbb{E}\left[\Tr\left( (I \otimes R)\Phi_{\left[ L M - i,1 \right]}(\rho_0)\right)\Tr\left( (Z \otimes R)\Phi_{\left[ L M - i ,1\right]}(\rho_0)\right) \right] \\
    &\geq q^2 \mathbb{E}\left[ \Tr\left( (I \otimes R)\Phi_{\left[ L M - i ,1\right]}(\rho_0)\right)^2\right]. \nonumber
\end{align}
In the last step we apply the first point of Lemma~\ref{lem:SecondMomentSingle} . The operator we further backwards transform further is $I \otimes R$. The reset of a qubit does not modify the depth required to transform the observable to a single-qubit observable.
%\textcolor{gray}{The reset of a qubit does not modify the depth requiered to transform the observable to a single-qubit observable, however it changes the operations we need to apply. We choose Cliffords that transform the Pauli operator to a $Z$ operator on the reset qubit. The action of the channel then transforms $\mathcal{A}^*(Z)^2 = (q I + (1-q) Z)^2 $.  By applying Lemma~\ref{lem:Pauli2Mixing} we symmetrize the terms. Keeping only the identity term we obtain a prefactor $q^2$.}

In the final layer before the reset we choose Cliffords that transform the operator $Q$ to a $Z$ operator on a reset qubit. Let $LM -j$ denote the depth at which the contracted observable meets the reset qubit, satisfying $j \leq \Delta$, then
\begin{align}
    \mathbb{E}\big[\Tr(P \Phi'^M(\rho_0))^2\big] \geq \frac{1}{C^\Lambda} D_{\max}^{2 {K^d \Delta}} q^{2 \Gamma} \mathbb{E}\left[ \Tr\left( (Z \otimes I^{n-1})\mathcal{A}(\Phi_{\left[ L M - j,1 \right]}(\rho_0))\right)^2 \right]. \label{eq:Shrinking_SI}
\end{align}
We apply the adjoint map of the amplitude damping channel $\mathcal{A}$ to the Pauli string and use the first point of Lemma~\ref{lem:SecondMomentSingle}
\begin{align}
   \mathbb{E}\left[ \Tr\left(\mathcal{A}^* (Z \otimes I^{n-1})\Phi_{\left[ L M - j ,1\right]}(\rho_0)\right) \right]
    &=\mathbb{E}\left[ \Tr\left( ((q I+(1-q)Z) \otimes I^{n-1})\Phi_{\left[ L M - j,1 \right]}(\rho_0)\right)^2 \right] \nonumber\\
    &\geq q^2 \mathbb{E}\left[ \Tr\left( (I \otimes I^{n-1})\Phi_{\left[ L M - j,1 \right]}(\rho_0)\right)^2\right] = q^2. \label{eq:Damping_SI2}
\end{align}
In the final equality we used that $\Phi_{\left[ L M - j,1 \right]}(\rho_0)$ is a physical state and therefore has trace 1. We combine Eq.~\eqref{eq:Damping_SI2} and Eq.~\eqref{eq:Shrinking_SI} to lower bound the variance of an individual term. By summing all the individual terms in Eq.~\eqref{eq:Varaince_writtenout_SI}, we obtain the lower bound:
\begin{align}
    \operatorname{Var} \left[ \Tr\left(O\, \Phi^M\!(\rho_0)\right)\right] \geq \sum a_P^2 \frac{1}{C^\Lambda} D_{\max}^{2 {K^d \,\Delta}} q^{2 \Gamma + 2}.
\end{align}
\end{proof}
%%%%%%%%%%%%%%%%%%%%%%%%%%%%%%%%%%%%%%%%%%%%%%%%%%%%%%%%%%%%%%%%%%%%%%%%%%%%%%%%%%%%%%%%%%%%%%%%%%%%%%%%%%%%%%%%%%%%%%%%%%%%%%
\section{Review of barren plateaus}
Here we provide a quick review of literature on unitary and noise-induced barren plateaus. For a more comprehensive overview see Ref. \cite{larocca_barren_2025}.
\begin{definition}[Barren plateau]
A cost function exhibits a barren plateau if the variance of the cost function gradients is exponentially concentrated with system size:
    \begin{align}
    \operatorname{Var}\left[ \partial_\mu C \right] = \mathcal{O}\left(\frac{1}{b^n} \right),
\end{align}
for some constant $b>1$. 
\end{definition}
Further, if the gradient is centered at zero then by Chebyshev's inequality measuring gradients larger than any constant becomes exponentially unlikely in system size. Detecting exponentially small changes and optimizing the loss would require an exponentially large number of measurements (shots), making the algorithm both inefficient and unscalable \cite{larocca_barren_2025}.

     In their seminal work, \citet{mcclean_barren_2018} showed that for quantum circuits that form a 2-design the gradient of the cost function vanishes exponentially with the number of qubits. 
     Geometrically local random quantum circuits satisfy this property for linear depth \cite{brandao_local_2016} in one dimension and depth $\mathcal{O}(n^{1/d})$ in $d$ dimensions \cite{harrow_approximate_2023}. Furthermore, also gradient free optimization fails as the entire cost landscape becomes flat \cite{arrasmith_effect_2021}. 
     Subsequent theory refined this picture by pinpointing when plateaus occur: for example, \citet{cerezo_cost_2021} proved that even a shallow layered ansatz will exhibit exponential gradient decay if the cost grows with system size (global observable), whereas using a local cost operator yields only a polynomially vanishing gradient provided the circuit depth grows only logarithmically with system size.
     Entanglement between visible and hidden units can be a further cause of barren plateaus \cite{ortiz_marrero_entanglement-induced_2021}.
     Similarly, Holmes et al. quantified that ansatz expressibility controls trainability: ans\"atze closer to 2-designs (more expressive) produce flatter loss landscapes and hence smaller gradient magnitudes \cite{holmes_connecting_2022}. These effects have been recently connected to the dimension of the dynamical Lie algebra the of the circuit ansatz \cite{ragone_lie_2024,fontana_characterizing_2024}.

In many cases noise further worsens the problem.
\citet{wang_noise-induced_2021} show that for local Pauli noise circuit expectation values and gradients deterministically concentrate. Specifically, the expectation value approaches that of the fully-mixed state exponentially in the depth of the circuit.
 In the work Ref. \cite{singkanipa_beyond_2025} the authors show that HS-contractive non-unital noise can lead to noise induced fixed points that are distinct from the fully mixed state and that some non-unital circuits can avoid noise-induced barren plateaus.
Nonunitary elements have emerged as a promising approach to address unitary barren plateaus, with numerical evidence presented in Ref. \cite{binkowski_barren_2023}.
 Recently, the authors \citet{mele_noise-induced_2024} proved that in the presence of single qubit non-unital noise parameterized quantum circuits avoid both noise induced and unitary barren plateaus. Specifically, they show that for local cost functions the last $\mathcal{O}(\log(n))$ layers remain trainable. This counterintuitive result highlights the critical dependence of NIBPs on noise type.
 Dissipation as a tool to mitigate barren plateaus has been established by \citet{sannia_engineered_2024}. They show that dissipation can turn a global cost function into a local cost function. 

The growth of entanglement of quantum circuits has been linked to the appearance of barren plateaus \cite{ortiz_marrero_entanglement-induced_2021,sack_avoiding_2022}. Haar random states exhibit volume-law entanglement \cite{page_average_1993}. Periodically measuring quantum circuits can limit the amount of entanglement present in the system. As the measurement rate is varied, such systems can transition between volume-law and area-law entangled phases, a phenomenon known as a measurement-induced phase transition \cite{skinner_measurementinduced_2019}. \citet{wiersema_measurementinduced_2023} numerically  consider random quantum circuits and observe a measurement-induced phase transition and further demonstrate that a sufficiently high measurement rate inhibits the suppression of gradients.

Shortly prior to our work, an independent study \cite{deshpande_dynamic_2024} has been released that analyzes variances of the cost function in the presence of dynamical operations, with findings that align with our conclusions. Our approach differs substantially from Ref. \cite{deshpande_dynamic_2024} both in methodology and in underlying assumptions. In particular, the authors show that the variance of the cost function can be lower bounded, and thereby that there exist parameters whose gradients do not concentrate \cite{arrasmith_equivalence_2022,miao_equivalence_2024}. In contrast, we identify explicit conditions on the circuit architecture and pinpoint which gradients avoid concentration. This distinction is crucial: rather than proving that some gradients remain trainable, we provide a constructive framework for designing circuits where gradients are guaranteed to be trainable.

\section{Absence of barren plateaus} \label{app:Barren_Lower_app}
Here we consider dissipative parameterized quantum circuits. Specifically, we assume some gates of the circuit are parameterized and these parameters are optimized by minimizing a cost function. We investigate the scaling of the gradients of this cost function with system size. The circuit depends on variational parameters $\boldsymbol{\theta} := (\theta_1, \dots, \theta_m) \in \mathbb{R}^m$. These parameters determine the operation of a subset of the two-qubit gates, which are of the form $\{ \exp(-i\theta_\mu H_\mu) \}_{\mu=1}^m$, where $H_\mu$ are two-local Hermitian operators satisfying $0<\|H_\mu\|_\infty \leq 1$. The left right invariance of the Haar random two-qubit gates allows us to absorb the action of the parameterized gate into the random gate. Therefore, the results of previous sections are still applicable. We consider a gate with parameter $\theta_\mu$ which is located at unitary layer $\mathcal{U}_i$. 
With these parameters we can introduce the \textit{cost function} associated to an observable $O \in \mathcal{B}(\mathcal{H)}$ as 
\begin{align}
    C(\bm{\theta}) \coloneqq \Tr \left( O\, \Phi^M\!(\rho_0) \right)
\end{align}

%%%%%%%%%%%%%%%%%%%%%%%%%%%%%%%%%%%%%%%%%%%%%%%%%%%%%%%%%%%%%%%%%%%%%%%%%%%%%%%%
%%%%%%%%%%%%%%%%%%%%%%%%%%%%%%%%%%%%%%%%%%%%%%%%%%%%%%%%%%%%%%%%%%%%%%%%%%%%%%%%

\begin{lemma}(Gradient commutator) \label{lem:Grad_commutator}
    Let $\mu \in [m]$. We consider the parameterized gate $\exp(-i \theta_\mu H_\mu)$ located at layer $LM-i$ of the circuit. 
    Then the gradient can be expressed as 
    \begin{align}\label{eq:GradientExpression}
        \partial_\mu C = i \, \Tr \big( \Phi_{[LM-i-1,1]}(\rho_0) \big[ H_\mu, \Phi^*_{[LM,i]}(O) \big] \big). 
    \end{align}
\end{lemma}
\begin{proof} We re-express the cost function
\begin{align}
    C = \Tr\left(  \Phi_{[LM,LM-i]}  \circ \Phi_{[LM-i-1,1]}(\rho_0) O\right) = \Tr\left(  \Phi_{[LM-i-1,1]}(\rho_0)\Phi_{[LM,LM-i]} ^*( O) \right).
\end{align}
The gradient with respect to $\theta_\mu$ evaluates to
\begin{align}
    \partial_\mu C &= \Tr\left(  \Phi_{[LM-i-1,1]}(\rho_0) \partial_\mu\Phi_{[LM,LM-i]} ^*( O) \right) \nonumber\\
    &= \Tr\left(  \Phi_{[LM-i-1,1]}(\rho_0) \partial_\mu \exp(i H_\mu \theta_\mu)\Tilde{\Phi}_{[LM,LM-i]} ^*( O) \exp(-i H_\mu \theta_\mu)\right) \nonumber\\
    &= i \Tr\left(  \Phi_{[LM-i-1,1]}(\rho_0) H_\mu\Phi_{[LM,LM-i]} ^*( O) \right)  - j \Tr\left(  \Phi_{[LM-i-1,1]}(\rho_0) \Phi_{[LM,LM-i]} ^*( O)H_\mu \right) \\
    &= i \Tr\left(  \Phi_{[LM-i-1,1]}(\rho_0) \left[ H_\mu,\Phi_{[LM,LM-i]} ^*( O)\right] \right). \nonumber
\end{align}
We use $\Tilde{\Phi}^*$ to denote the channel $\Phi^*$ with the parameterized unitary removed. In the second line we used that the adjoint channel has adjoint Kraus operators. 
\end{proof}
%%%%%%%%%%%%%%%%%%%%%%%%%%%%%%%%%%%%%%%%%%%%%%%%%%%%%%%%%%%%%%%%%%%%%%%%%%%%%%%%%%%%%%%%%%%%%%%%%%%%%%%%%%%%%%%%%%%%%%%%
\begin{lemma}\label{lem:ExpGradZero}(Expectation value zero) 
The expectation value of the derivative of the cost function is zero. 
\begin{align}
    \mathbb{E}[\partial_\mu C] = 0.
\end{align}
\end{lemma}
\begin{proof}
We first express the expectation value of the gradient with Eq.~\eqref{eq:GradientExpression} and then use that single-qubit haar random gates form a global one design (Lemma~\ref{Lem:SingleGlobalOne}):
    \begin{align}
        \mathbb{E}[\partial_\mu C]  &= i \, \mathbb{E}\left[\Tr \left( \Phi_{[LM-i-1,1]}(\rho_0) \left[ H_\mu, \Phi^*_{[LM,i]}(O) \right] \right)\right] \nonumber\\
        & = \Tr \left( \Phi_{[LM-i-1,1]}(\rho_0) \left[ H_\mu, \mathbb{E}\left[\Phi^*_{[LM,i]}(O) \right]\right] \right)\\
        &= \Tr \left( \Phi_{[LM-i-1,1]}(\rho_0) \left[ H_\mu, \Tr\left(\Phi^*_{[LM,i]}(O) \right) /d I \right] \right) = 0. \nonumber
    \end{align}
    In the final equality, we use that any operator commutes with the identity.
\end{proof}
%%%%%%%%%%%%%%%%%%%%%%%%%%%%%%%%%%%%%%%%%%%%%%%%%%%%%%%%%%%%%%%%%%%%%%%%%%%%%%%%
%%%%%%%%%%%%%%%%%%%%%%%%%%%%%%%%%%%%%%%%%%%%%%%%%%%%%%%%%%%%%%%%%%%%%%%%%%%%%%%%

\begin{lemma}(Pauli mixing gradients (Ref.~\cite[Lemma 43]{mele_noise-induced_2024}) ) \label{lem:PauliMixingGradients}
Let $H_\mu$ be a 2-local Hamiltonian. Define the operator function 
\begin{align}
f_j(\cdot) := i \, \Tr \big( \Phi_{[1,k]}(\rho_0) \big[ H_\mu, \Phi^*_{[k,L-j]}(\cdot) \big] \big).
\end{align}
Then, the following holds:
\begin{enumerate}
    \item Let $O := \sum_{P \in \{I, X, Y, Z\}^{\otimes n}} a_P P$, where $a_P \in \mathbb{R}$ for any $P \in \{I, X, Y, Z\}^{\otimes n}$. We have
    \begin{align}
    \mathbb{E}[(f(O))^2] = \sum_{P \in \{I, X, Y, Z\}^{\otimes n}} a_P^2 \, \mathbb{E}[(f(P))^2]. \label{eq:C12}
    \end{align}
    
    \item Moreover, for any $P \in \{I, X, Y, Z\}^{\otimes n}$, we have
    \begin{align}
    \mathbb{E}[(f(P))^2] = \frac{1}{3^{|P|}} \sum_{\substack{Q \in \{I, X, Y, Z\}^{\otimes n} \vert\\ \operatorname{supp}(Q) = \operatorname{supp}(P)}} \mathbb{E}[(f(Q))^2].
    \end{align}
\end{enumerate}

\end{lemma}
For proof see Ref.~\cite{mele_noise-induced_2024}.

%%%%%%%%%%%%%%%%%%%%%%%%%%%%%%%%%%%%%%%%%%%%%%%%%%%%%%%%%%%%%%%%%%%%%%%%%%%%%%%%
%%%%%%%%%%%%%%%%%%%%%%%%%%%%%%%%%%%%%%%%%%%%%%%%%%%%%%%%%%%%%%%%%%%%%%%%%%%%%%%%
\begin{lemma}(Gradient zero outside the light cone)\label{lem:LightconeGrads}
    If $H_\mu$ is outside the inverse light cone of the observable $O$ the gradient vanishes $\partial_\mu C = 0$.
\end{lemma}
\begin{proof}
We use Lemma~\ref{lem:Grad_commutator} to express the gradient 
\begin{align}
        \partial_\mu C = i \, \Tr \big( \Phi_{[LM-i-1,1]}(\rho_0) \big[ H_\mu, \Phi^*_{[LM,i]}(O) \big] \big). 
\end{align}
If $H_\mu$ is outside the inverse light cone of $O$, $\Phi^*_{[LM,i]}(O)$ is equal to the identity on the support of $H_\mu$. As the identity commutes with any operator the commutator is equal to zero. 
\end{proof}
%%%%%%%%%%%%%%%%%%%%%%%%%%%%%%%%%%%%%%%%%%%%%%%%%%%%%%%%%%%%%%%%%%%%%%%%%%%%%%%%
%%%%%%%%%%%%%%%%%%%%%%%%%%%%%%%%%%%%%%%%%%%%%%%%%%%%%%%%%%%%%%%%%%%%%%%%%%%%%%%%
\begin{theorem}(Lower bound on the variance of the partial derivative)\label{thm:lower_derivative}
Consider a circuit as described in Sec.~\ref{sec:DisCircuitModel} where every $L$ layers $n_{r}$ qubits are reset with probability $q$. The circuit is defined on a lattice of $n$ qubits and dimension $d$. Let $\mu \in [m]$. We consider the parameterized gate $\exp(-i \theta_\mu H_\mu)$ located at layer $LM-i$. Define the cost function $C(\bm{\theta}) = \Tr \left( O\, \Phi^M\! (\rho_0) \right)$ for $\rho_0$ an arbitrary initial state and $O=\sum_{P \in \{I, X, Y, Z\}^{\otimes n}} a_P P $ an observable with diameter $K$. Then if the support of $O$ is in the light cone of the gate
\begin{align}
    \operatorname{Var} \left[ \partial_\mu C\right] \geq \sum_{P \in \{I, X, Y, Z\}^{\otimes n}} a_P^2 \frac{D_{\max}^{2K^d (i -1) +2 \Delta}}{C^{K^d (i -1)+\Lambda}}  \frac{\norm{H_{\mu}}_\infty^2}{64} q^{2\Gamma+2} (1-q)^{K^d \frac{i-1}{L} \frac{n_r}{n}}
\end{align}
Here, $\Lambda=\max\left[   \delta^d \left( d (\delta-1)  + L\right),K^d(d (K-1) + L)\right]$ with $\delta = \frac{1}{2}\left(\frac{n}{n_r}\right)^{1/d} \!\!+ \,\frac{K}{2} $. $\Gamma =  \frac{ n_r }{n} \frac{K^d(d(K-1)+L)}{L}$ and $\Delta =\max \left[ d (\delta-1)  + L,d (K-1)+L\right]
    $, and $C \leq \abs{\operatorname{CL}(2)} \abs{\operatorname{CL}(1)}$. $D_{\max} = \max_{Q \in P(1)} (D_Q)$ is the direction that is least shrunk by the noise. 
\end{theorem}
The main idea of the proof is to fix gates to backward transform the observable to an observable with support on the qubits acted on by the gate.
 Finally, the variance of the gradient is related to the variance of an observable which can be lower bounded using Proposition~\ref{prop:variance_lowerbound}.
\begin{proof} 
Using Lemma~\ref{lem:ExpGradZero} we can express the variance of the gradient as
\begin{align}
    \operatorname{Var} \left[ \partial_\mu C\right] = \mathbb{E}\left[\left( \partial_\mu C\right)^2\right] = \mathbb{E}\left[\left( f_0(O) \right)^2\right],
\end{align}
where $f_j$ is defined as in Lemma~\ref{lem:PauliMixingGradients}. We use the first point of Lemma \ref{lem:PauliMixingGradients} to re-express:
\begin{align}
    \mathbb{E}\left[\left( f_0(O) \right)^2\right] = \sum_{P \in \{I, X, Y, Z\}^{\otimes n}} a_P^2 \mathbb{E}\left[\left( f_0(P) \right)^2\right].
\end{align}
In the following step, we track how one of these Pauli observables are affected by the noise model. Now we consider the action of the adjoint of the noise channel $\mathcal{N}$, it transforms
\begin{align}
    \mathcal{N}^{*} (P) = \bigotimes_{j \in \operatorname{supp}(P)} (c_{Q_j} + D_{Q_j} Q_j) =  \prod_{j \in \operatorname{supp}(P)} D_{Q_j} P_{Q_j}.
\end{align}
Here we used that $\mathcal{N}$ is unital and therefore $c_Q = 0$ for all $Q$. Since noise shrinks different Pauli components differently, we choose the least shrunk direction $D_{\max}$ to obtain the most favorable bound i.e. $D_{\max} = \max_{Q \in P(1)} (D_Q)$ associated to Pauli operator $Q_{\max}$. By our assumption, the noise channel is not the fully depolarizing channel and therefore $D_\text{max}$ nonzero. 
To proceed we define 
\begin{align}
f'_j(\cdot) := i \, \Tr \big( \Phi_{[1,k]}(\rho_0) \big[ H_\mu, \Phi'^*_{[k,L-j]}(\cdot) \big] \big).
\end{align}
here $\Phi'_{[k,L-j]}$ denotes the map $\Phi_{[k,L-j]}$ with single-qubit unitaries removed Similarly, we define $f''$ using the map $\Phi''^*_{[k,L-j]} $, which excludes both noise and single-qubit unitaries. We now choose Clifford operations to map every entry of $P$ to $Q_\text{max}$, then
\begin{align}
    \mathbb{E}\left[\left( f_0(P) \right)^2\right] \geq\left(\frac{ D_\text{max}^2 }{\operatorname{CL}(1)}\right)^{\abs{P}} \mathbb{E}\left[ \left( f_0''(Q_{\text{max}}^{\otimes\operatorname{supp}(P)}) \right)^2\right].
\end{align}
Further, $\abs{P} \leq K^d$. We proceed fixing Clifford gates to transform the operator $Q$ into an operator with support on $H_\mu$. We refer to this backward-evolved operator $Q_b$. In this path we make sure that the support of the operator does not grow meaning it acts on less than $K^d$ qubits. Since $H_\mu$ is in the backward light cone of the observable $O$ we are guaranteed that such a path exists.
This path is $i -1$ layers deep that require us to fix less than $K^d (i -1)$ gates. 

When a qubit along the path is reset, we first apply a Clifford transformation that maps the relevant Pauli operator to $Z$. After reset, the operator becomes
$\mathcal{A}(Z)^{\otimes 2} \otimes R^{\otimes 2} = (q I + (1 - q) Z)^{\otimes 2} \otimes R^{\otimes 2},$ where $R$ is a Pauli string acting on the other qubits. Expanding the reset part yields four terms. However, by the Pauli-mixing lemma~\ref{lem:Pauli2Mixing}, only symmetric positive terms survive. This allows us to discard the identity term, keeping only the $Z$ which carries a prefactor of $(1-q)^2$. Keeping the non-identity term ensures that the backwards-evolved Pauli string $Q_b$ is not the identity allowing for nontrivial commutators. In total, there are less than $\gamma = \frac{i-1}{L} K^d \frac{n_r}{n}$ resets in the path.
Then 
\begin{align}
\label{eq:backwardstogate}
    \mathbb{E}\left[(f_0(P))^2\right] \geq \left(\frac{D_{\max}^2 }{C} \right) ^{K^d (i -1)} (1-q) ^{2 \gamma} \mathbb{E}\left[ \Tr \left( \Phi_{[LM - (i-1) ,1]}(\rho_0) i \left[ H_\mu , Q_{b}\right]\right)^2 \right]
\end{align}

We can expand the rotation generator in the two-qubit Pauli basis $H_\mu = \sum_{R \in \{I, X, Y, Z\} ^{\otimes 2}} b_{R} R$. Substituting
\begin{align}
    \mathbb{E}\left[ \Tr \left( \Phi_{[LM - (i-1) ,1]}(\rho_0) i \left[ H_\mu , Q_{b}\right]\right)^2 \right]  = \sum_{R \in \{I, X, Y, Z\}^{\otimes 2}} b_R^2 \mathbb{E}\left[ \Tr \left( \Phi_{[LM - (i-1) ,1]}(\rho_0) Q_R\right)^2 \right]
\end{align}
where we defined $Q_R \coloneqq i \left[ R , Q_b\right]$. In the transformation, we made sure that the support of the operator does not grow and therefore $\operatorname{diam(Q_R) \leq K}$. We know that since $H_\mu$ is traceless and non-zero there exists a $b_{\tilde{R}} = \max_{R \in P(2)} [{b_R}]>0$. We choose the transformation such that the backward $Q_b$ does not commute with $\tilde{R}$:
\begin{align}
    \mathbb{E}\left[ \Tr \left( \Phi_{[LM - (i-1) ,1]}(\rho_0) i \left[ H_\mu , Q_b\right]\right)^2 \right] \geq b_{\tilde{R}}^2 \mathbb{E}\left[ \Tr \left( \Phi_{[LM - (i-1) ,1]}(\rho_0) Q_{\tilde{R}}\right)^2 \right].
\end{align}
Since $b_{\tilde{R}}^2$ is the maximal coefficient in the Pauli basis we can lower bound it by the average coefficient
\begin{align}\label{eq:coeff}
    b_{\tilde{R}}^2 \geq \frac{1}{16} \sum b_R^2 = \frac{1}{64} \norm{H_{\mu}}_2^2 \geq \frac{1}{64} \norm{H_\mu}_\infty^2.
\end{align}
Now we apply Proposition~\ref{prop:variance_lowerbound} to lower bound the variance of the observable $Q_{\tilde{R}}$. The observable has a diameter smaller than $K$
\begin{align} \label{eq:lowesinglegrad}
    \mathbb{E}\left[ \Tr \left( \Phi_{[LM - (i-1) ,1]}(\rho_0) Q_{\tilde{R}}\right)^2 \right] \geq \frac{1}{C^\Lambda} D_{\max}^{2 {K^d \,\Delta}} q^{2 \Gamma + 2}.
\end{align} 
Combining the results in Eq.~\eqref{eq:backwardstogate}, Eq.~\eqref{eq:coeff} and Eq.~\eqref{eq:lowesinglegrad} we find that 
\begin{align}
    \operatorname{Var}\left[ \partial_\mu C \right] 
    &\geq \sum_{P \in \{I, X, Y, Z\}^{\otimes n}} a_P^2  \mathbb{E}\left[(f_0(P))^2\right] \nonumber\\
    &\geq  \sum_{P \in \{I, X, Y, Z\}^{\otimes n}} a_P^2  \left(\frac{D_{\max}^2 }{C} \right) ^{K^d (i -1)} (1-q)^{2 \gamma}\mathbb{E}\left[ \Tr \left( \Phi_{[LM - (i-1) ,1]}(\rho_0) i \left[ H_\mu , Q_b\right]\right)^2 \right] \nonumber\\
    &\geq \sum_{P \in \{I, X, Y, Z\}^{\otimes n}} a_P^2 \left(\frac{D_{\max} ^2}{C} \right) ^{K^d (i -1)} (1-q)^{2 \gamma} b_{\tilde{R}}^2\frac{1}{C^{\Lambda}} D_\text{max}^{2\Delta}q^{2\Gamma+2} \\
    &\geq \sum_{P \in \{I, X, Y, Z\}^{\otimes n}} a_P^2 \frac{D_{\max} ^{2 K^d(i-1) + 2\Delta} }{C^{K^d(i-1) + \Lambda}}  \frac{\norm{H_{\mu}}_\infty^2}{64} q^{2\Gamma+2} (1-q)^{2 \gamma}.\nonumber
\end{align}

\end{proof}
%%%%%%%%%%%%%%%%%%%%%%%%%%%%%%%%%%%%%%%%%%%%%%%%%%%%%%%%%%%%%%%%%%%%%%%%%%%%%%%%
%%%%%%%%%%%%%%%%%%%%%%%%%%%%%%%%%%%%%%%%%%%%%%%%%%%%%%%%%%%%%%%%%%%%%%%%%%%%%%%%

\begin{corollary}(Absence of barren plateaus logarithmic depth jumps) \label{cor:AbsenceBarrenLog}
    Consider a circuit as described in Section~\ref{sec:DisCircuitModel}, with jumps of logarithmic depth $L = \mathcal{O}(\log(n))$ on a $d$-dimensional lattice. Let $\exp(-i \theta_\mu H_\mu)$ be a parameterized gate located at layer $LM - i$, where $i = \mathcal{O}(\log(n))$. If the number of reset qubits scales with system size, that is, $n_r = \Omega(n)$, and the support of the gate lies within the inverse light cone of a local cost function with bounded support $K = \mathcal{O}(1)$, then the gradient of the cost function with respect to $\theta_\mu$ does not exhibit exponential concentration. We can lower bound the variance of the gradient by 
    \begin{align}
        \operatorname{Var}\left[\partial_\mu C \right] \geq \Omega\left(\frac{1}{\operatorname{poly(n)}}\right).
    \end{align}
\end{corollary}
In particular, the variance of the gradient is independent of the depth of the circuit $L \cdot M$.
Provided the jumps are of constant depth, $L = \mathcal{O}(1)$ the support of the observable $O$ is bounded and the gate of which we calculate the gradient is a constant number of layers of from the measurement get a constant lower bound. 

%%%%%%%%%%%%%%%%%%%%%%%%%%%%%%%%%%%%%%%%
\begin{figure}
    \centering
    \includegraphics[width=0.9\linewidth]{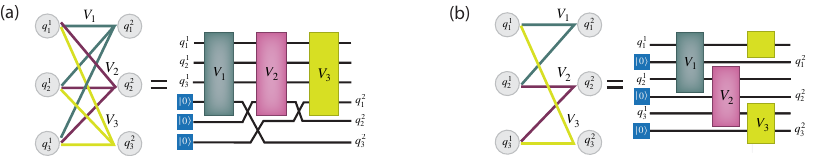}
    \caption{\textbf{Dissipative perceptron-based quantum neural networks.} (a) Global perceptrons. Dissipative quantum neural networks that rely on deep global perceptrons, analogous to fully connected classical networks, suffer from barren plateaus. (b) Geometrically local perceptrons. By restricting the locality of the operations to be geometrically local dissipative quantum neural networks can escape the barren plateau problem. }
    \label{fig:DQNN}
\end{figure}

%%%%%%%%%%%%%%%%%%%%%%%%%%%%%%%%%%%%%%%%%%%%%%%%%%%%%%%%
Dissipative or perceptron-based quantum neural networks associate qubits with the nodes of the network. In each layer, the qubits from the previous layer are discarded, and the connections between layers--referred to as perceptrons---define the structure of the model \cite{altaisky_quantum_2001,sagheer_autonomous_2013,schuld_quest_2014,siomau_Quantum_2014,torrontegui_unitary_2019,tacchino_artificial_2019,beer_training_2020,wilkinson_evaluating_2022}.

\citet{sharma_trainability_2022} show that such networks can suffer from barren plateaus, particularly when the perceptrons are global (i.e., their support grows with system size) and deep, such that they approximate a 2-design. An example of a single layer from such a circuit is shown in Fig.~\ref{fig:DQNN}(a). Moreover, even shallow and local perceptrons can exhibit barren plateaus if the cost function is global. The corollary~\ref{cor:AbsenceBarrenLog} can be extended to show that geometrically local perceptrons with local cost functions do not suffer from barren plateaus.  We illustrate a single layer of such a configuration in one spatial dimension in Fig.~\ref{fig:DQNN}(b). However, the parameterized gate must be applied at most at logarithmic depth from the measurement as gradients are exponentially suppressed with this depth.

%%%%%%%%%%%%%%%%%%%%%%%%%%%%%%%%%%%%%%%%
%%%%%%%%%   NumericsSupple  %%%%%%%%%%%%
%%%%%%%%%%%%%%%%%%%%%%%%%%%%%%%%%%%%%%%%
\section{Numerical results}
We numerically investigate the presence of barren plateaus. To isolate the effect of unitary barren plateaus we consider a setting without noise. In this setting we go beyond the analytic results by correlating parameters between layers and by going beyond the brickwork structure. We then consider noise-induced barren plateaus and investigate the effect of noise in toric code ground state preparation.%and show that a unitary circuit that prepares toric code ground states will suffer from noise-induced barren plateaus while a dissipative circuit can avoid them.

\subsection{Unitary barren plateaus}\label{app:Barren_nonoise}
We simulate random parameterized quantum circuits to numerically investigate the presence of unitary barren plateaus. The circuit ansatz we consider is built from two-qubit bricks of the form:
\begin{align}
    U(\bm{\theta}) = R_Y(\theta_1) \otimes R_Y(\theta_2)  \operatorname{CNOT}R_X(\theta_3) \otimes R_X(\theta_4),
\end{align}
denoting rotations generated by $A$ as $R_A(\theta) = e^{-\frac{i}{2} \theta A } $. 
We consider the architecture in one dimension with open boundary conditions meaning the bricks alternatingly connect qubits $(1,2)(3,4)\dots(n-1,n)$ and qubits $(2,3)(4,5)\dots(n-2,n-1)$. The ends of the one dimensional chain are not connected.

We go beyond the brickwork structure with a QAOA inspired circuit as is used in many quantum machine learning, state preparation, and optimization tasks \cite{verdon_quantum_2019,wecker_progress_2015,farhi_quantum_2014}. The circuit layers consist of $R_{ZZ}$ $R_X$ and $R_y$ gates, while $R_{ZZ}$ is applied with even and odd first qubits.

The dissipative ansatz differs from the unitary one by incorporating non-unital amplitude damping channels. This channel is applied to every second qubit after 5 layers. We assume that the reset operation is perfect with output state $\ketbra{0}{0}$.

The observable we consider is $O = Z_1$ on the second qubit with associated cost is $C(\bm{\theta}) = \Tr (\rho(\bm\theta) O)$. We calculate the gradient with respect to a parameter $\theta_\mu$ that appears in the last layer. We clearly distinguish unitary from noise-induced barren plateaus in the simulation by removing all noise except the amplitude damping channel used to reset ancilla qubits. As a consequence, the suppression of the gradient is due to the space the ansatz explores rather than the noise. 

In Fig.~\ref{fig:Brick_reset5} we plot the variance of the gradient for the brickwork circuit. The layers are repeated 40 times and in the dissipative ansatz every 5 layers every second qubit is reset. In panel (a) we show the variance of the gradient for all unitaries parameterized individually. Panel (b) goes beyond the assumptions of our proof, here the parameters of the unitaries are repeated every five layers reminiscent of the circuit shown in Fig.~\ref{fig:Maindoublewide} (c). Both panels qualitatively show the same behavior; the variance of the gradient is exponentially suppressed for the unitary ansatz but not for the dissipative ansatz. Consequently, the unitary ansatz suffers from barren plateaus while the dissipative ansatz is barren plateau free. The variance of the QAOA ansatz is shown in~\ref{fig:QAOA_reset5}. The behavior closely matches the brickwork circuit. The gradient of the unitary ansatz is exponentially suppressed in system size while that of the dissipative circuit remains constant.
\begin{figure*}
    \centering
    \includegraphics{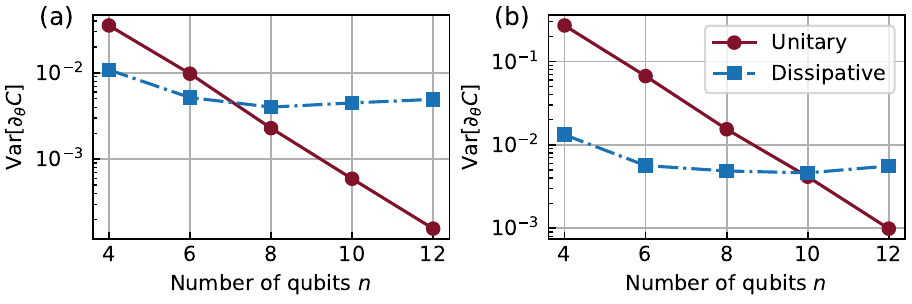}
    \caption{
    \textbf{Variance of the brickwork circuit} We plot the gradient variance for a one-dimensional brickwork ansatz with 40 layers. The dissipative ansatz resets every second qubit after 5 layers. In (a), all layer parameters are random, while in (b), they repeat every five layers. The gradient is measured for a local cost function based on the observable $Z$ on the second qubit. While the unitary gradient decays exponentially with system size, the gradient of the dissipative ansatz stays constant.}
    \label{fig:Brick_reset5}
\end{figure*}

\begin{figure*}
    \centering
    \includegraphics{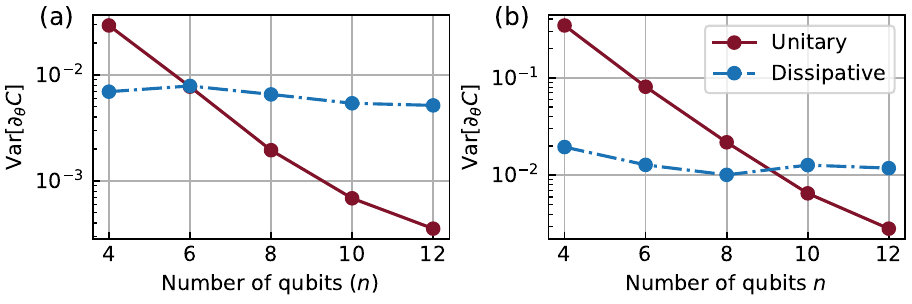}
    \caption{\textbf{Variance of the QAOA ansatz} We plot the gradient variance for a one-dimensional QAOA ansatz with 40 layers. The dissipative ansatz resets every second qubit after 5 layers. In (a), all layer parameters are random, while in (b), they repeat every five layers. The gradient is measured for a local cost function based on the observable $Z$ on the second qubit. While the unitary gradient decays exponentially with system size, the gradient of the dissipative ansatz stays constant.}
    \label{fig:QAOA_reset5}
\end{figure*}

\subsection{Noise induced barren plateaus}\label{app:toric_NIBP}
Here we consider the effect of noise numerically and present a problem that can only be solved through the use of dissipation. We consider ground state preparation of the toric code Hamiltonian. The toric code is a type of quantum error-correcting code defined on a two-dimensional lattice, arranged on the surface of a torus. It is a topological code, where qubits are placed on the edges of the lattice, and the code uses a set of quasi-local stabilizers to detect errors. These stabilizers are associated with the plaquettes (faces) and vertices (crosses) of the lattice. The logical states are the ground states of the toric code Hamiltonian
\begin{align}
    H = -\sum_\nu A_{\nu} -\sum_\beta B_{\beta}.
\end{align}
Here, $A_\nu = X_{\nu,1}X_{\nu,2}X_{\nu,3}X_{\nu,4}$ describes the stabilizer associated with plaquette $\nu$ and $B_\beta = Z_{\beta,1}Z_{\beta,2}Z_{\beta,3}Z_{\beta,4}$ the stabilizer associated with vertex $\beta$. Each stabilizer is four local connecting four neighboring qubits. 

%\textit{Setting:} 
The goal of both the dissipative and unitary algorithms is to prepare the ground state of the toric code. The toric code encodes 2 logical qubits; therefore, its ground-state manifold contains four degenerate states. 
%To ensure a fair comparison between the unitary circuit, which would more naturally prepare a single, pure, ground state, and dissipative circuit, which naturally prepares mixtures of the ground states, we consider the energy as the metric of success. Both circuit outputs can be converted into each other via logical operations and measurements. 
We assume qubits arranged on a torus with local connectivity. In the dissipative setting we additionally assume that for each plaquette or vertex we couple to one ancilla. These ancillae can be reset during the operation of the algorithm. 

%\paragraph{Unitary circuit depth}
A quantum circuit that, starting from a product state, prepares the ground state of the toric code with quasi-local gates on the lattice must have a depth of at least $\Omega( \sqrt{n})$ where $n$ is the number of qubits in the code due to the Lieb-Robinson bound \cite{dennis_topological_2002,bravyi_lieb-robinson_2006}. We assume that we do not know the form of the unitary circuit that prepares the ground state. Instead, we want to train a variational quantum eigensolver (VQE) with a QAOA ansatz to prepare the ground state \cite{peruzzo_variational_2014}. We interleave layers of the form 
\begin{align}
U(\bm{\theta}) &= \prod_{j=L} ^1 U_l(\bm{\theta}_l),\\
    U_l(\bm{\theta}_l)&=\prod_\nu e^{-\frac{i}{2}\theta_{l,A} A_\nu}\prod_\beta e^{-\frac{i}{2}\theta_{l,B} B_\beta}.
\end{align}
The depth of this VQE will have to scale with $L = \sqrt{n}$. Physically implementing gates introduces errors. We model these errors by introducing depolarizing noise for every rotation $e^{-\frac{i}{2}\theta_{l,A} A_\nu}$ on all qubits involved in the operation. 
\begin{figure*}
    \centering
    \includegraphics[width=\textwidth]{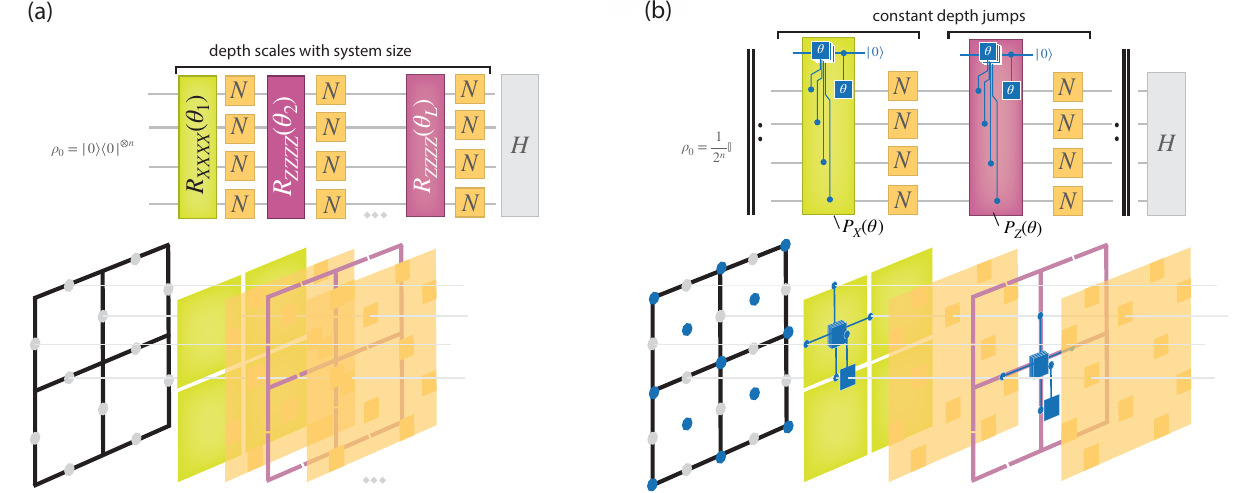}
    \caption{\textbf{Circuits for toric code ground state preparation} a. The unitary circuit is initialized in the product state $\ket{0^{\otimes n}}$. The ansatz consists of parameterized $XXXX$ and $ZZZZ$ rotations generated by the Hamiltonian terms. Following every rotation we apply depolarizing noise $N$. The rotations are arranged according to the connectivity of the toric code Hamiltonian. b. The dissipative circuit is initialized in the fully mixed state. The parameterized $P_X$ and $P_Z$ operations follow the connectivity of the toric code. They consist of four $CR_X$ rotations controlled by the qubits of each cross (plaquette) with an ancilla as the target followed by a $CR_X$ operation controlled by the ancilla targeting one qubit of the cross (plaquette). For every operation we apply depolarizing noise.}
    \label{fig:ToricCircuit}
\end{figure*}

%\paragraph{Dissipative circuit depth} 
The ground state of the toric code can be prepared by dissipative operators of constant weight \cite{dennis_topological_2002}.
The total time required to produce the ground state is not reduced in the dissipative setting \cite{konig_Generating_2014}. However, the depth of an individual jump is limited to a constant depth and these local jumps can be realized by a quantum circuit of constant depth. Therefore, we expect the ansatz to not be affected from noise induced and unitary barren plateaus. 

%%%Dissipative ansatz
The dissipative ansatz we consider is inspired by circuits that map to ground states of the individual stabilizer terms. Usually, such a circuit is constructed from the following primitives: To project transform to the +1 eigenstate of a $Z$ stabilizer $B_\beta = Z_{\beta,1}Z_{\beta,2}Z_{\beta,3}Z_{\beta,4}$ associated with vertex $\beta$, we first map the parity to an ancilla qubit using four CNOT operations with the involved qubits as control and the ancilla as a target. Then by applying a CNOT operation controlled by the ancilla the parity of is switched conditioned on the system parity being odd. This operation maps any state into an even parity state making it a +1 eigenstate of the vertex operator. To prepare ground states of the plaquette terms the stabilizer readout needs to be sandwiched between Hadamard operations.

We parameterize this ansatz by replacing all CNOT gates by $\text{C}R_X(\theta)$ gates with a $S$ operation on the target qubit. For $\theta = \pi/2$ they perform a CNOT operation. We denote the combined operation $P_X(\theta)$ and $P_Z(\theta)$. Following the application of these operators depolarizing noise acts on all involved qubits.

The depth of the layers that implement the jumps is independent of the number of physical qubits. The dissipative ansatz in conjunction with the constant depth of the jumps guarantees that the ansatz does not suffer from barren plateaus. In particular, NIBPs also do not present an issue.
\begin{figure*}
    \centering
    % the plotting script is in Pythonsteadstate/ToricCode/Plaqjump/eval_plaq.ipynb
    \includegraphics{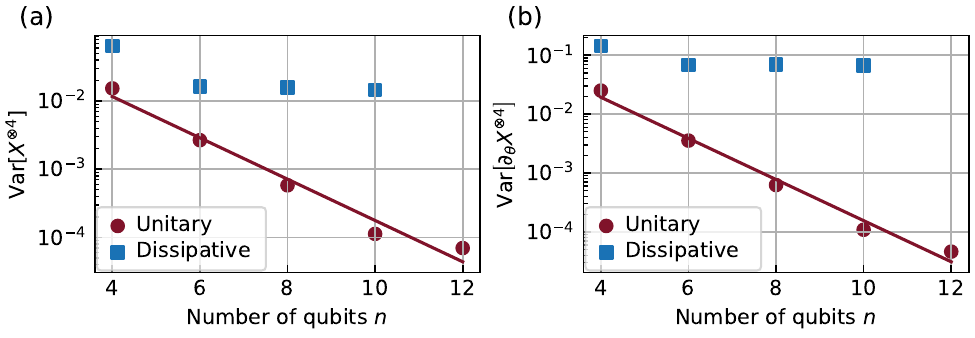}
    \caption{\textbf{Noise-induced variance concentration} The unitary circuit is initialized in the state $\ket{+}^{\otimes n}$ and then subjected to $n/2$ layers of unitaries, interleaved with depolarizing noise at a rate of 0.1. In contrast, the dissipative circuit starts in the fully mixed state and undergoes repeated application of two jumps of constant depth. Due to the high noise rate, the steady state of the dissipative process can be approximated by a circuit of depth 20. (a) The variance of a local observable of the unitary circuit is exponentially suppressed in system size. (b) Also the variance of the gradient is exponentially suppressed for the unitary case. We conclude that the unitary circuit suffers from NIBPs while the dissipative circuit does not.}
    \label{fig:grad_concentration}
\end{figure*}

\begin{figure}
    \centering
    % the plotting script is in Pythonsteadstate/ToricCode/Trained_Entropy_cost/Cost/plot_cost.ipynb
    \includegraphics{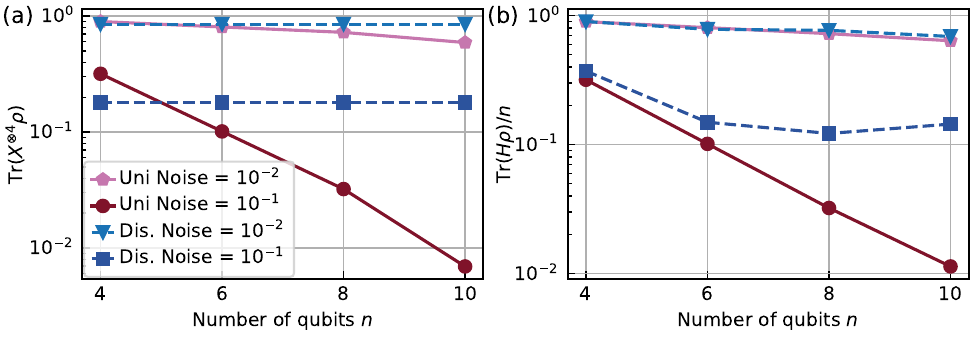}
    \caption{\textbf{Noise-induced deterministic concentration} In the presence of noise the expectation values of any unitary circuits exponentially approach that of the fully mixed state in the depth of the circuit. Dissipative circuits can avoid this concentration. In Panel (a) we compare the expectation value of one term of the toric code Hamiltonian while in (b) we plot the expectation value of the Hamiltonian. We compare the trained unitary ansatz to the trained dissipative ansatz. The expectation value of the unitary circuit exponentially approaches the noise-induced fixed point in the number of qubits while that of the dissipative circuit does not decay exponentially.}
    \label{fig:trainedconcentration}
\end{figure}
In Fig.~\ref{fig:grad_concentration} we show numerical evidence for the absence of noise-induced barren plateaus in dissipative learners. In panel (a) we plot the expectation value of one Pauli string for the unitary and dissipative circuit and different system sizes . We consider a rectangular lattice of height $2$ and width $n/2$. This allows us to simulate more system sizes (compared to square lattices) before running into computational bottlenecks. In this setting, the depth of the unitary circuit has to increase as $n/2$ while the dissipative jumps remain constant depth. As a consequence, the expectation value of the unitary exponentially concentrates at the value of the fully mixed state. The same behavior can be seen for the circuit gradients in panel (b). The variance of the gradients of the unitary circuit exponentially decay with system size, while those of the dissipative circuit do not decay. 

Noise not only leads to probabilistic concentration but also to deterministic concentration we show this in Fig.~\ref{fig:trainedconcentration}. We train both the dissipative and unitary ansatz for different system sizes $n$ of the toric code on rectangular lattice of height $2$ and width $n/2$. Then we evaluate the expectation value of a single Pauli string (Panel (a)) and of the Hamiltonian (Panel (b)). The expectation value of the single string remains constant with system size for the dissipative ansatz but is suppressed in for the unitary one. The Hamiltonian expectation value normalized by the number of terms in the Hamiltonian ideally remains at one. For the dissipative ansatz the expectation value shows an initial drop followed by a plateau, while that of the unitary ansatz exponentially drops with system size. These results show that toric code ground states can be prepared dissipatively but not unitarily.
%This can be explained by the number of stabolizers increaseing 

%%%%%%%%%%%%%%%%%%%%%%%%%%%%%%%%%%%%%%%%
%%%%%%%%%   NumericsSteady  %%%%%%%%%%%%
%%%%%%%%%%%%%%%%%%%%%%%%%%%%%%%%%%%%%%%%

\section{Expression for the steady-state and numerical simulation}
Dissipative quantum algorithms can have interesting noise-induced fixed points that are distinct from the fully mixed state \cite{mi_stable_2024,singkanipa_beyond_2025}. We model these algorithms as unitary transformations interleaved with non-unital noise processes. Here we derive closed form expressions for their fixed points. This expression can be used for analytical analysis of the steady state. Furthermore, by evaluating it numerically we can observe how the simulation of the evolution converges to the steady state.

We use the coherence vector formalism introduced in SI.~\ref{app:coherence_vec_pic}, and consider a circuit that consists of $M$ repetitions of a circuit primitive, which we will refer to as the \textit{jump}. One such jump consists of initial reset of one or multiple ancillary qubits, $L$ unitaries $U_j(\theta_j)$ interleaved with nonunitary channels $\mathcal{N}_j$ (Fig.~\ref{fig:Maindoublewide} (c)). We assume that these nonunitary channels are contractive, which ensures convergence to a steady state. The complete circuit consists of a set of $L$ unitaries $\{U_i \}_{j=1\dots L}$ and $L$ channels $\{\mathcal{N}_j\}_{j= 1,\dots,L}$. In contrast to SI.~\ref{sec:DisCircuitModel} here the unitaries are correlated between the jumps. 

The density matrix after the $j$-th application of the jump is denoted
\begin{align}
%\rho_1 &= \text{Damp}(\rho_0), \\
    \rho_j &= \left[ \mathcal{N}_j \circ \mathcal{U}_j (\theta_j) \right]\rho_{j-1}, \quad \forall j.
\end{align}
The coherence vector corresponding to $\rho_j$ is denoted as $\bm{v}_j$. In the coherence vector picture, each unitary is associated with an orthogonal transformation $O_j(\theta)$ and each channel to an affine transformation $v' = M_j \bm{v}  + \bm{c}_j$. We denote the combined operation of the unitary and the channel $v' =M_j O_j(\theta) \bm{v} + \bm{c}_j=\Omega_j \bm{v} + \bm{c}_j$
The coherence vector of state $\rho_j$ is
\begin{align}
    \bm{v}_j &= \Omega_j \dots \Omega_1 \bm{v}_0 + \bm{d}_j, \label{eq:ssequation}\\
    \bm{d}_j &= \sum^{j-1}_{r=1} \prod^{r+1}_{s=j}\Omega_s \bm{c}_r + \bm{c}_j .
\end{align}
After $M$ applications of the circuit primitive the state can be expressed as
\begin{align}
    \bm{v}^M = \left(\prod_{j=L} ^1 \Omega_j  \right)^M \bm{v}_0 + \sum_{s=0}^{M-1} \left(\prod_{j=L} ^1 \Omega_j  \right)^s \bm{d}_L.
\end{align}
In the limit $M \gg 1$ we can disregarded the first term as by Lemma~\ref{lem:TP_channel_evsmallerone} the eigenvalues of $\Omega_j$ are smaller or equal to one. Using the same assumption we can apply the geometric series for matrices and find an expression for the state after $M$ applications,
\begin{align}
    \bm{v}^M &= \left( I- \prod_{j=L} ^1 \Omega_j  \right)^{-1} \left( I - \left(\prod_{j=L} ^1 \Omega_j\right)  ^{M}\right) \bm{d}_L,
\end{align}
and for the steady state,
\begin{align}
    \bm{v}^\infty &=  \left( I-\prod_{j=L} ^1 \Omega_j  \right)^{-1} \bm{d}_L. \label{eq:SteadyState}
\end{align}
This expression allows for a more thorough analytical analysis of the steady-state.

%The dissipative algorithms we present in this manuscript exponentially decay towards their steady-state in the depth of the circuit. Performing such simulations is costly. The depth required for convergence depends on the couplings and parameters used. In general, one has to simulate deep circuits (long times) in order to converge.  In addition, one has to repeat the simulations multiple times to ensure that one has indeed prepared the steady-state, as in practice the effective decay rate would not be known. 
Numerically evaluating the steady state can require deep circuits if noise strengths are low. In addition, one has to repeat the simulations multiple times to ensure that one has indeed prepared the steady-state. 
Eq.~\eqref{eq:SteadyState} gives a closed-form solution for the state after an infinite number of jumps. When we numerically evaluate the steady-state of a dissipative circuit we solve 
\begin{align}\label{eq:numerics}
    \bm{v}^\infty = \Omega_L \dots \Omega_1 \bm{v}^\infty + \bm{d}_L.
\end{align}
for $\bm{v}^\infty$. This avoids matrix inversion of Eq.~\eqref{eq:SteadyState} and instead solves a linear equation that takes $\mathcal{O}( (4^n )^2)$ rather than $\mathcal{O}( (4^n)^3)$ with $n$ the number of qubits. The main advantage is that the solution to Eq.~\eqref{eq:numerics} gives the exact steady-state of the circuit. Computing the steady-state is still expensive with the matrices that transform the Bloch vectors of size $4^n$ rather than $2^n$ for qubit based simulation. Furthermore, one has to solve a linear equation which scales quadratic in the size of the matrix. 

The conceptual advantages of direct steady-state simulation have led us to build a quantum circuit simulator that allows one to construct a quantum circuit from unitary gates and quantum channels. The simulator then uses the Bloch representation to calculate the exact steady-state of the circuit.

To ensure correctness of the simulator, we first test the standard (non steady-state) quantum circuit simulation. We construct random circuits using random gates and random parameterizations, and compare the simulation output to that of pennylane \cite{bergholm_pennylane_2018}. These simulations verify that both the implementations of the gates and the transpilation into a multi-qubit circuit work correctly \cite{huang_classical_2022}.
Next we compare the direct simulation of the steady-state to layer wise simulation in pennylane. The goal of this comparison is to: \textbf{1.} Verify the results of the simulator that computes the exact steady-state. \textbf{2.} See how many layers are needed when performing a layered simulation. The aim of the circuit is to prepare the Bell state $\ket{\psi} = \frac{1}{\sqrt{2}} \left(\ket{00} + \ket{11} \right) $. The system consists of 3 qubits; two model qubits on which the desired state is prepared and one ancilla. The state $\ket{\psi}$is the ground state of the stabilizer Hamiltonian $ H = -XX-ZZ$ which we choose as a cost function. The circuit consists of Trotterized evolution with the Hamiltonian, followed by coupling to the ancilla with a Heisenberg like interaction. After every jump the ancilla is reset. 
In Fig.~\ref{fig:InfidelityDecayLog} we plot the infidelity between the exact steady-state to the output of the layered simulation. We observe that the infidelity decays exponentially with the number of layers in the circuit. It plateaus at approximately $10^{-3}$. In theory, the infidelity should decay further. A possible explanation are numerical effects.
\begin{figure}
    \centering
    \includegraphics{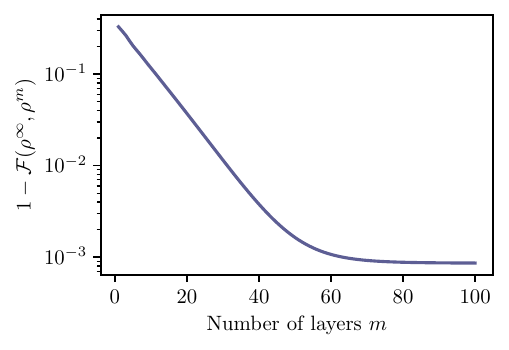}
    \caption{\textbf{Infidelity of layered simulation with steady-state.} We plot the infidelity of the exact steady-state with a state that consists of $m$ layers. The layered simulation exponentially approximates the exact steady-state. At around $60$ layers this decay plateaus with an infidelity of $10^{-3}$. We explain the plateau by numerical effects.}
    \label{fig:InfidelityDecayLog}
\end{figure}

\end{onecolumngrid}
%\end{comment}
\end{document}